\title{On Reward Sharing in Blockchain Mining Pools}
\author{Burak Can\thanks{
		Department of Data Analytics and Digitalisation, Maastricht University, the
		Netherlands, E-mail: \texttt{b.can@maastrichtuniversity.nl}. This work was partially supported by the Graduate School of Business and Economics, Maastricht University.} \and
	Jens Leth Hougaard\thanks{Economics, NYU-Shanghai, China, E-mail: \texttt{jlh21@nyu.edu};  and IFRO, University of Copenhagen, Denmark, E-mail: \texttt{jlh@ifro.ku.dk}. This work is supported by the Center for Blockchains and Electronic Markets funded by the Carlsberg Foundation under grant no. CF18-1112.
	} \and
	Mohsen Pourpouneh\thanks{
		Department of Food and Resource Economics (IFRO), University of Copenhagen,
		Denmark, E-mail: \texttt{mohsen@ifro.ku.dk.} This work is supported by the Center for Blockchains and Electronic Markets funded by the Carlsberg Foundation under grant no. CF18-1112. } 
}
\newtheorem{theorem}{Theorem}
\newtheorem{prop}{Proposition}
\newtheorem{lemma}{Lemma}
\newtheorem{definition}{Definition}
\newtheorem{corollary}{Corollary}
\newtheorem{remark}{Remark}
\newtheorem{example}{Example}
\begin{document}
	\maketitle
	\noindent
	\rule[0.5ex]{1\columnwidth}{1pt}
	\begin{abstract}
	This paper proposes a conceptual framework for the analysis of reward sharing schemes in  mining pools, such as those associated with Bitcoin. The framework is centered around the reported shares in a pool instead of agents and results in two new fairness criteria, absolute and relative redistribution. These criteria impose that the addition of a share to the pool affects all previous shares in the same way, either in absolute amount or in relative ratio.  We characterize two large classes of economically viable reward sharing schemes corresponding to each of these fairness criteria in turn.  We further show that the intersection of these classes brings about a generalization of  the well-known proportional scheme, which also leads to a new characterization of the proportional scheme as a corollary. 
	\end{abstract}
	\textit{Keywords:} Blockchain, Bitcoin, fairness, mining pools, resource allocation, mechanism design.
	
\noindent	\textit{JEL Classification: D63,  G20,   L86, D31.}

\noindent
\rule[0.5ex]{1\columnwidth}{1pt}

\newpage

\section{Introduction}
The invention of the first decentralized cryptocurrency, Bitcoin, (\cite{nakamoto2019bitcoin}) sparked a huge interest in decentralized networks with distributed trust, both academically and businesswise. The technology behind these decentralized networks is called Blockchain. Loosely speaking, a blockchain is a ledger composed of an immutable chain of transactions organized in \emph{blocks}. Each block is synchronized across the network users (nodes) through a distributed consensus protocol which ensures that all the nodes in the network agree on the latest status of the ledger\footnote{To agree on the latest status of the ledger, Blockchains utilize various different consensus protocols, see \cite{nguyen2018survey,mingxiao2017review}.}.  

In terms of financials, as of May 2021, the global crypto market cap exceeds $2.26$ trillion\footnote{https://coinmarketcap.com} and a single Bitcoin is traded around $56000$ USD while its total market cap is comparable to the GDPs of various countries\footnote{For example, Hungary, Kenya, Luxembourg and many others. See the following link for a ranking: \url{https://worldpopulationreview.com/countries/countries-by-gdp}. For other financial statistics, see \url{https://bitinfocharts.com/bitcoin/}.}. In terms of other use cases, there are an increasing number decentralized applications on various blockchains, e.g., Ethereum mainnet has about 2700 dApps deployed, providing solutions in banking, finance, law,
logistics, and other sectors. Following the global interest in Blockchains, all EU members and European Commission have joined forces to form the European Blockchain Partnership (EBP) while China has already launched trials for its national cryptocurrency, the digital Yuan.

\subsection{Consensus Protocols and Pools}
The development of the blockchain technology still in its infancy. The oldest consensus protocol that has a proven track record is the original Proof-of-Work (PoW)\footnote{To achieve consensus under PoW protocol, some nodes, called \emph{miners}, compete with one another to solve a cryptographic puzzle. Miners search for an integer (a nonce) such that when combined (hashed) together with the list of transactions (the block header), produces another integer known as the hash value. In case, this hash value is less than a predefined number (network target value) set by the network, the so called puzzle is solved.} protocol in~\cite{nakamoto2019bitcoin}, while many new blockchains utilize Proof-of-Stake (PoS) protocol for its energy efficiency and transaction throughput. The process of achieving consensus is called \emph{mining} in PoW, while this is achieved by \emph{staking} in PoS\footnote{The solution to this aforementioned puzzle in PoW is called a \emph{full solution}, for which the successful miner is given a financial reward. Similarly participants in PoS are awarded according to their stakes in the network}. In theory, the probability of finding a full solution in PoW is proportional to the computational power of the miner\footnote{For an extensive demonstration of these concepts see Anders Brownworth's blog at  \url{https://andersbrownworth.com/blockchain/hash}.}. However, the computational power of individual miners are negligible in comparison to that of the network\footnote{At the time of writing this paper, the total hashing power of the Bitcoin blockchain is $120\time 10^{18}$KH/s, while a state-of-the-art CPU has about $11$KH/s. Therefore, the probability of finding a full solution as a solo miner is roughly $\frac{1}{10^{19}}$.}. Therefore, mining alone leads to a highly unstable income and incentivizes miners to \textit{pool} their resources and split the resulting rewards. Such cooperative actions result in lower income variation for miners (\cite{romiti2019deep,rosenfeld2011analysis}). This is why, despite being a decentralized system by design, PoW has been shown to induce centralization both theoretically (\cite{leshno2019bitcoin} and \cite{chen2019axiomatic}) and in practice through the emergence and total dominance of centralized mining-pools. In fact, as of 2021, almost  $90\%$ of the total computational power in Bitcoin blockchain is provided by the top ten mining pools\footnote{\url{https://btc.com/stats/pool}}. Consequently, mining pools are probably the most important actors in the blockchain ecosystem.

Typically, a mining pool is maintained and coordinated by a \textit{pool manager}. The success of a pool depends on its computational power, therefore the miners must commit their resources to find a full solution through mining.  To estimate the computing power of each miner,  the pool manager sets an easier puzzle to solve and requests the solutions to this puzzle, which are called \emph{partial solutions}\footnote{This is done by setting an internal pool target value higher (easier) than the network target value for the original puzzle (see footnote 6). Note that the set of partial solutions for this easier target value is a superset of the full solutions. Therefore it is a very good approximation of the computational commitment of a miner.}. Miners submit these partial solutions to the manager, each of which forms a \emph{share}. In case one of these shares is a full solution, the pool manager gets the reward and distributes it among the pool members based on their submitted shares and a predefined \emph{reward sharing scheme}.  The choice of reward sharing scheme is a crucial design element in a mining pool. The scheme must ensure that the pool is economically viable and must provide miners with the right incentives to act in ways beneficial for the common good of the pool. 

\subsection{Our contribution}

In this paper, we analyze and design \emph{reward sharing schemes} in mining pools by proposing a comprehensive axiomatic perspective. We depart from existing literature in terms of framework in various ways. First, our axiomatic framework is not on the consensus protocols but on the mining pools in any of these protocols. Second, our model is not restricted to a static single block, since various schemes in practice pay the miners repetitively over time in various blocks. Third, we propose reward sharing schemes and  \emph{allocations} not on the miners in a pool but instead on the shares submitted by these miners. This is particularly enriching, because defining allocations on shares (rather than on miners), enables more granular and relevant parameters such as the submission time or order of the shares. Therefore, we can practically formulate any of the existing schemes under our unified axiomatic framework, e.g., the Slush and PPLNS\footnote{See \cite{rosenfeld2011analysis} for a comprehensive list of reward sharing schemes, and Section \ref{dis_section} for formal definitions.}, and allow designers to propose new ones. 


We formulate several desirable axioms for reward sharing schemes. In particular, we propose two axioms concerning fairness, i.e., how the awards for the shares should be redistributed\footnote{This redistribution could be either in absolute amount or in relative ratio. Fair reward sharing is crucial for pool stability which in turn affects overall system efficiency. Given the symmetric nature of the shares in a pool round, these fairness axioms represent a natural (stronger) versions of the celebrated ``population monotonicity'' axiom in the literature on fair allocation (e.g. \cite{thomson2016fair}).} when the round is delayed by an additional share. We show that, together with other axioms, each of these fairness axioms, \emph{absolute redistribution} and \emph{relative redistribution}, characterize two distinct classes of reward sharing schemes, the class of absolute fair and relative fair schemes. Thereafter, we characterize the generalized class of proportional reward schemes, i.e., $k$-pseudo proportional schemes, which satisfies both of these axioms simultaneously. Finally, by imposing an additional \emph{strict positivity} requirement, we single out the well-known proportional reward scheme. 
%
%

In a recent study by \cite{tovanich2021empirical} and \cite{belotti2018bitcoin}, it is shown that the reward sharing schemes
and pool fees influence miners’ decisions to join, change, or
exit from a mining pool. Therefore from the perspective of miners as well as the pool management, the choice of a reward sharing scheme is a vital element of a well functioning and stable mining pool. Hence, the miners  perception of fairness of the  reward sharing scheme  is crucial.

\subsection{Related Literature}
Our paper relates to the literature on miner's general incentives under the PoW protocol. The seminal paper by \cite{rosenfeld2011analysis} provides one of the earliest comprehensive mathematical analysis on the reward sharing schemes in mining pools. \cite{schrijvers2016incentive} discuss strategic behavior of miners under ``incentive compatibility''. \cite{zolotavkin2017incentive} investigate the interplay between incentive compatibility and the distribution of computational power among miners in the pool. \cite{lewenberg2015bitcoin} model mining pools from a cooperative game perspective while \cite{qin2018research} and \cite{chatzigiannis2019diversification} investigate competition among pools\footnote{For a list of further readings regarding miners' behaviour, see \cite{eyal2015miner,eyal2014majority,biais2019blockchain,babaioff2012bitcoin,sapirshtein2016optimal,fisch2017socially,carlsten2016instability,kiayias2016blockchain,cong2019decentralized}.}. 

In terms of methodology, our paper utilizes tools from the literature on economic design (see e.g., \cite{moulin1991axioms,young1994equity,thomson2018divide}) and is inspired by welfare economics, in particular the literature on distributional fairness (see e.g., \cite{roemer1996theories,moulin1987equal,young1994equity} and fair allocation in networks (see e.g., \cite{hougaard2018allocation,moulin2018fair}). As mentioned, we take an axiomatic approach to the analysis of reward sharing schemes: a list of desirable properties of generic schemes (axioms) are identified, and individual schemes are uniquely characterized by different sets of axioms. In turn, this allows a qualitative comparison of various reward sharing schemes  based on their axiomatic foundation.

There has been a steadily growing literature, pointing towards utilization of mechanism design in the context of blockchains. \cite{can2019economic} proposes the use of economic design on the consensus protocol, while \cite{hougaard2018optimal} provides a different rationale for PoW by decentralized socially optimal reward schemes. Concerning an axiomatic approach to PoW, \cite{leshno2019bitcoin} and \cite{chen2019axiomatic} propose characterizations (on the consensus protocols) with properties such as anonymity, collusion-proofness and sybil-proofness the last two of which are analogous to merging and splitting\footnote{Both papers propose characterizations of the proportional reward scheme which is well studied in the economic design literature and in bankruptcy models (see \cite{banker1981equity,moulin1987equal,chun1988proportional,de1999coalitional}).}.

The remaining part of the paper is organized as follows. In Section \ref{model_section} we propose our modeling framework and notation. Section \ref{axioms_section} defines and comments on various axioms including the two fairness conditions. Section \ref{fair_sharing_section} presents the characterization results. Finally, Section \ref{dis_section} closes with a discussion of well-known schemes under our proposed framework and logical independence of the characterizing axioms.

\section{Model and notation}\label{model_section}

Let $T$ denote the set of all possible \emph{time signatures}. Let $H$ denote the set of all possible \emph{hashes} and $\mathcal{S}\subsetneq H$ denote the set of all admissible hashes, i.e., hash values that satisfy the difficulty of the pool. A (partial solution) \emph{share}  is then a two-tuple $s=(h,t)\in \mathcal{S}\times T$ consisting of an admissible hash submitted to the pool at time\footnote{Since some known reward sharing schemes use time signature as input we denote the shares with $(h,t)$. We shall drop the time signature when it is redundant for the schemes we analyze.} $t$. We denote the time signature of a share $s$ by a function $\tau(s)$, i.e., for $s=(h,t)$, we have $\tau(s)=t$. The set of all ordered shares submitted in a pool is denoted by $S=\{s_1,s_2, s_3, \ldots, s_m\}\subsetneq\mathcal{S}\times T$ where the order is defined by the shares' time signatures. We do not associate shares with individual miners since all schemes that are used in practice are neutral towards the ``identity'' of the miner.

A {\it pool round} is an ordered set of shares ending with  a share, submitted by the pool, which is a full solution on the blockchain. So everytime a share submitted by the pool is a full solution on the blockchain, a pool round ends and a new round begins. We consider the partitioning of the submitted set of shares by (pool) rounds, and denote it by $\mathcal{P}(S)$: for instance, let $l=|\mathcal{P}(S)|$ and let $P_1, P_2, \ldots, P_l $ denote the set of shares submitted in rounds $1, 2, \ldots, l$ respectively. Let $H=(S, \mathcal{P}(S))$  denote the \textit{history} of the pool.

Given the partition of a history $\mathcal{P}(S)$ and any share $s\in S$, we denote the set of all shares in the same round as $s$ by $P(s)=X\in \mathcal{P}(S)$ such that $s\in X$. Given a round $P\in \mathcal{P}(S)$, we denote the relative rank of a share $s$ in $P$ by $\rho(s)=|\{s'\in P | \tau(s')<\tau(s)\}|+1$, i.e., the number of shares that are submitted in round $P$ up to, and including, $s$. We define the {\it length} of a round $P$ by the number of shares submitted in that round, i.e. $|P|$.

\begin{example}\label{exm1}
	Equation \ref{equation_history} below, illustrates an example of a history: shares in bold are full solutions.  
	\begin{align}\label{equation_history}
		S&=\{ s_1,\dots, \mathbf{s_{10}}, s_{11}\dots,\mathbf{s_{150}}, \ldots, s_{564},\dots,\mathbf{s_{589}} ,\ldots, \mathbf{s_m}\} \nonumber \\
		\mathcal{P}(S)&= \{ \{\underbrace{s_1,s_2,\dots, \mathbf{s_{10}}}_{P_1} \}, \{\underbrace{s_{11}\dots,\mathbf{s_{150}}}_{P_2}\}, \ldots, \{\underbrace{s_{564},\dots,\mathbf{s_{589}}}_{P_r}\}, \ldots, \{\underbrace{\dots, \mathbf{s_m}}_{P_l}\}\}
	\end{align}
	
	%
	
	Considering, for instance,  the share $s_{564}$ we see that it is the first share of round $r$, so $P(s_{564})=P_r=\{s_{564}, \ldots , s_{589}\}$ and $\rho(s_{564})=1$. The length of round $r$ is $|P_r|=26$.  
\end{example}

\begin{definition}\label{def_reward_rule}
	Given a history $H=(S, \mathcal{P}(S))$, we define a \emph{reward sharing scheme}, $\alpha,$ by awards $$\alpha(s,  H)\in \mathbb{R}_+$$ to every share $s\in S$.
\end{definition}

In practice, the payments to miners are made after a certain amount of time, therefore when considering the allocation of rewards, we analyze a round that is terminated and already confirmed on the Blockchain\footnote{The pool manager should wait for at least 6 block confirmation to make sure the block found by the pool ends up on the longest chain.}. For each pool round, the pool manager charges a fee  to compensate the costs of running the pool. For a history $H=(S,\mathcal{P}(S))$ we denote this fee by the mapping $f:  2^{S}\setminus \emptyset \rightarrow \mathbb{R}$. A generic fee for a round $P_r$ in that case would be $f(P_r)$. We denote the award to be distributed in this round by $R_r=B-f(P_r),$ where $B$ is the block reward assigned to the pool when finding a full solution. Next, we introduce one of the  most  intuitive schemes, also known as the \textit{Proportional scheme} below.
\begin{example}\label{exm2} The Proportional scheme is one of the most straightforward and intuitive reward sharing schemes. For any given round, the pool manager gets a fixed fraction of the reward and then it assigns to every share $s\in P(s),$ a proportion of the reward relative to the length of the corresponding round $|P(s)|$. Formally, for any history $H=(S,\mathcal{P}(S))$ and any share $s\in S$:
	$$\alpha(s , H)=\frac{R_r}{|P(s)|}.$$
\end{example}

In Section~\ref{formal_rules} we provide a brief discussion of various reward sharing schemes that are used in practice or proposed to improve existing practices.


For our axiomatic analysis below, we are inspired by the framework in  \cite{schrijvers2016incentive}. In particular,  we consider situations in which full solutions are delayed, i.e., when the pool round ends with the submission of additional share(s) at the end. In the simplest of such cases, we consider histories where a pool round is extended by one additional share at the end of the round. We therefore need some additional definitions. 

Formally, let $H=(S, \mathcal{P}(S))$ be the history of a pool. Let $s^*\not \in S$ be the so-called additional last share (into the $r^{th}$ round), i.e., $\tau(\underline{s})<\tau(s^*)<\tau(\overline{s})$ for all $\underline{s}\in P_r$ and for all $\overline{s}\in P_{r+1}$. The history $H'=(S', \mathcal{P}'(S'))$ is an \emph{extension of $H$ at the $r^{th}$ round} whenever $S'=S\cup \{s^*\}$ and:
\begin{itemize}
	\item $P'_k=P_k$  for all $k\neq r$,
	\item $P'_r =P_r \cup \{s^*\}$.
\end{itemize}

\begin{remark}\label{remark_stronger}
	Note that, one can consider the extension of a round at any position and not only at the end of the round. Similarly we could also consider extension of a round by appending multiple shares at the end. However both of these scenarios lead to very restrictive axioms in the upcoming section (see Remark~\ref{different_extensions}).
\end{remark}

Next,  we define the restriction of a history to a single round. Let $H=(S,\mathcal{P}(S))$ be any history. The \emph{restriction of $H$ to the $r^{th}$ round} is denoted as $H|_r=(P_r, \{P_r\})$. That is, the set of shares in the history only consists of those at the $r^{th}$ round and the only partition of the history is $P_r$.

\begin{example}\label{extension} Recall the situation in Example~\ref{exm1}. Below is the example of an extension of the history in Equation~\ref{equation_history} at the $r^{th}$ round by the full solution (share) $s^*$.
	\begin{align*}
		S&=\{ s_1,\dots, \mathbf{s_{10}}, s_{11}\dots,\mathbf{s_{150}}, \ldots, s_{564},\dots,s_{589} , \mathbf{s^*}, \ldots, \mathbf{s_m}\}\\
		\mathcal{P}(S)&= \{ \{\underbrace{s_1,s_2,\dots, \mathbf{s_{10}}}_{P_1} \}, \{\underbrace{s_{11}\dots,\mathbf{s_{150}}}_{P_2}\}, \ldots, \{\underbrace{s_{564},\dots,s_{589},\mathbf{s^*}}_{P_r}\}, \ldots, \{\underbrace{\dots, \mathbf{s_m}}_{P_l}\}\}
	\end{align*}
	
	Moreover, an example of restriction of the history in Equation~\ref{equation_history} to the $r^{th}$ round is
	$$H|_r= \Big( \{s_{564},\dots,\mathbf{s_{589}}\}, \, \Big\{\{\underbrace{s_{564},\dots,\mathbf{s_{589}}}_{P_r}\}\Big\}\,\ \Big)$$
\end{example}
%

\section{Axioms}\label{axioms_section}
In the following section we discuss five desirable properties of generic reward sharing schemes. Two of these reflect different aspects of fairness related to how the rewards must be re-distributed in case there is one additional share in the round\footnote{In this sense our fairness axioms can be viewed as relational axioms in case of a variable populations framework, see e.g., \cite{thomson2016fair}. }.

The first property ensures a \textit{fixed total reward} to the miners for any pool round in a history. That implies that the fee charged by the pool manager is the same for any two rounds in a history. This guarantees that  the pool manager can not take advantage (or be harmed) from shorter (longer) rounds, as the pool must distribute the same amount to the miners in any round. This can be seen as a desirable feature, especially because it reduces volatility and uncertainty of miners' income, which is the main driving force behind forming pools. A miner could potentially be more interested in working under well-defined income schemes and avoid arbitrary fee chargers by the pool manager. Formally,

$\bullet$ \textbf{Fixed Total Reward:} A scheme $\alpha$ satisfies fixed total reward whenever, for any history $H=(S, \mathcal{P}(S)),$ and any two rounds $P, P'\in \mathcal{P}(S)$, we have $$\sum\limits_{s\in P} \alpha(s , H) = \sum\limits_{s\in P'} \alpha(s , H).$$ 

%
%
%

Next, we consider situations where the submission time of shares may change. To analyze such situations, we create a ceteris paribus case where the time signature of only a single share in a history changes in a ``minimal'' way. Note that in pools, shares are typically submitted very frequently hence by a minimal delay we mean situations which essentially do not affect the order in which these shares are submitted, e.g., negligible network lags in milliseconds. Formally, let $H=(S, \mathcal{P}(S))$ be any history,  and consider any round $P_r\in \mathcal{P}(S)$, and any share $s_i\in P_r$ in this round. A \emph{time-shift} $H'=(S', \mathcal{P}'(S'))$ of $H$ at $s_i$ is defined as:
\vspace{-0.2cm}
\begin{enumerate}
	\item $P'_j=P_j$ for all $j\neq r$, and
	\item $S'=(S\setminus \{s_i\})\cup \{s'_i\}$ for some $s'_i \in \mathcal{S}\setminus S$ such that $\tau(s_{i-1})<\tau(s'_i) < \tau(s_{i+1})$,
\end{enumerate}

Point $1$ above simply says in both histories all rounds are identical except round $r$. Point $2$ says in round $r$ everything is the same except share $s_i$ which moved tiny bit later in history $H'$ but still between the same share as in history $H$. The next condition, dubbed \textit{ordinality}, requires that time-shifts should not affect the reward distribution, so long as the order of shares is preserved.

$\bullet$ \textbf{Ordinality:} A scheme $\alpha$ satisfies ordinality if, for any $H=(S,\mathcal{P}(S))$, and for any time-shift $H'=(S', \mathcal{P}'(S'))$ of $H$ at any $s_i$, we have:
$$\alpha(s , H) = \alpha(s, H') \text{ for all } s\in S\setminus \{s_i\}$$

The next condition, dubbed \textit{budget limit},  requires that the pool manager charges a nonnegative fee. This ensures that the pool does not go bankrupt.

$\bullet$ \textbf{Budget Limit:} A scheme $\alpha$ satisfies budget limit whenever, for any history $H=(S, \mathcal{P}(S)),$ and any round $P\in \cal{P}(S)$, we have:
$$\sum\limits_{s\in P} \alpha(s , H) \leq B.$$

We now turn to our two main fairness conditions. The first is dubbed \textit{absolute redistribution}, and requires that in case a round is extended (delayed) by one additional share, the award assigned to any existing share in the corresponding round decreases by the same \emph{amount}\footnote{This is, in fact, a strong version of Population monotonicity in \cite{thomson2016fair} and resembles the spirit of Myerson's fairness axiom \cite{myerson1977graphs} stating that agents should be affected equally from entering mutual agreements.}. Formally,

$\bullet$ \textbf{Absolute redistribution:} A scheme $\alpha$ satisfies absolute redistribution 
whenever, for any history $H=(S,\mathcal{P}(S))$, any round $P_r$ with $|P_r|>1$, and any extension $H'=(S', \mathcal{P}'(S'))$ at the $r^{th}$ round we have   for any   $s_i,s_j\in P_r$:
$$\alpha(s_i , H) - \alpha(s_i , H')=\alpha(s_j , H) - \alpha(s_j , H').$$

In the same spirit, the next condition, dubbed \textit{relative redistribution}, requires that in case a round is extended (delayed) by one additional share, the reward of each existing share in the corresponding round is decreased by the same \emph{ratio}. Formally,

$\bullet$ \textbf{Relative redistribution:} A scheme $\alpha$ satisfies relative redistribution 
whenever for any history $H=(S,\mathcal{P}(S))$, any round $P_r$ with $|P_r|>1$, and any extension $H'=(S', \mathcal{P}'(S'))$ at the $r^{th}$ round, we have  for any  $s_i,s_j\in P_r$  with  $\alpha(s_i , H)\neq 0$  and  $\alpha(s_j , H)\neq 0$:
$$\frac{ \alpha(s_i , H')}{\alpha(s_i , H)}=\frac{ \alpha(s_j , H')}{\alpha(s_j , H)}.$$

\begin{remark}\label{different_extensions}
	As hinted in Remark~\ref{remark_stronger}, the definition of extension is crucial for both of the above axioms. In case we
	construct the extensions with not appending a single share at the end but instead arbitrarily positioning the share in the round, we would be strengthening both fairness axioms. Similarly, in case we construct the extensions with multiple shares, again these two fairness axioms will be strengthened, hence leading to a restrictive framework possibly leading to impossibilities.
\end{remark}

The final condition, dubbed \textit{round based rewards}, requires that the distribution of the reward only depends on the round itself and it is not affected by any other rounds in the history.  Formally,

$\bullet$ \textbf{Round based rewards:} A scheme $\alpha$ satisfies round based rewards whenever, for any history $H$, and any round $P_r$, we have for all  $s\in P_r$: $$\alpha(s, H) = \alpha(s , H|_r).$$ 

We first observe that round based rewards strengthens the fixed total reward condition, such that the latter imposes the fixed total rewards for any rounds in any history. 

\begin{lemma}\label{lemma1}
	If a scheme $\alpha$ satisfies fixed total reward and round based rewards, then for any two histories $H=(S, \mathcal{P}(S))$ and $H'=(S', \mathcal{P}'(S'))$ and any two rounds $P\in \mathcal{P}(S)$ and $P'\in \mathcal{P}'(S')$ we have
	$$\sum\limits_{s\in P} \alpha(s , H) = \sum\limits_{s\in P'} \alpha(s , H').$$
\end{lemma}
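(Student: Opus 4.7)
The plan is to reduce the general two-history claim to the fixed total reward property applied to a single cleverly constructed history that contains (isomorphic copies of) both rounds $P$ and $P'$ simultaneously.

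First I would peel off the dependence on the ambient histories $H$ and $H'$ using round based rewards. Specifically, for the round $P\in \mathcal{P}(S)$ we have $\alpha(s,H)=\alpha(s,H|_r)$ for every $s\in P$, where $H|_r=(P,\{P\})$ is the single-round restriction; and analogously $\alpha(s,H')=\alpha(s,H'|_{r'})$ for every $s\in P'$. Summing gives
\[
\sum_{s\in P}\alpha(s,H)=\sum_{s\in P}\alpha(s,H|_r),\qquad \sum_{s\in P'}\alpha(s,H')=\sum_{s\in P'}\alpha(s,H'|_{r'}).
\]
So it suffices to prove that the total reward of a single-round history is the same for any two single-round histories.

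Next I would construct a bridging history $H^\star$ whose partition consists exactly of (a faithful copy of) $P$ followed by (a faithful copy of) $P'$, i.e.\ $H^\star=(P\cup P',\{P,P'\})$, after, if necessary, relabeling the time signatures of the shares in $P'$ so that they all come strictly after those of $P$ and all shares in $P\cup P'$ remain distinct in $\mathcal{S}\times T$. Since round based rewards implies that the rewards on a given round depend only on that round and not on what else sits in the history, we get $\alpha(s,H^\star)=\alpha(s,H^\star|_{P})=\alpha(s,H|_r)$ for each $s\in P$, and likewise $\alpha(s,H^\star)=\alpha(s,H'|_{r'})$ for each $s\in P'$. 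Applying fixed total reward to the two rounds of $H^\star$ then yields
\[
\sum_{s\in P}\alpha(s,H|_r)=\sum_{s\in P}\alpha(s,H^\star)=\sum_{s\in P'}\alpha(s,H^\star)=\sum_{s\in P'}\alpha(s,H'|_{r'}),
\]
and chaining with the first step closes the argument.

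The only genuine obstacle is the legality of the construction of $H^\star$: we need to be sure that $P\cup P'$ with the partition $\{P,P'\}$ is an admissible history in the sense of Section~\ref{model_section}, which essentially requires the time signatures across the two rounds to be totally ordered and the shares to be distinct elements of $\mathcal{S}\times T$. This is handled by the mild relabeling of the time signatures of shares in $P'$ (pushing them to a later time window than those of $P$) and, if needed, adjusting their hashes to stay inside $\mathcal{S}$; crucially, neither the round based rewards axiom nor the fixed total reward axiom references the specific time signatures, so the relabeling is harmless for the computation and the copy of $P'$ used in $H^\star$ yields the same reward sum as $P'$ itself inside $H'|_{r'}$. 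Aside from this bookkeeping, everything is a one-line application of the two axioms.
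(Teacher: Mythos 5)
Your overall strategy is the same as the paper's: reduce to single-round histories via round based rewards, then glue the two rounds into one bridging history $H^\star=(P\cup P',\{P,P'\})$ and apply fixed total reward there. When the time signatures of $P$ and $P'$ do not interleave, this is exactly the paper's Case 1 and your argument is correct.

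The gap is in how you handle the case where the two rounds overlap in time. You propose to relabel the time signatures of the shares in $P'$ (and possibly their hashes) to push them past $P$, and you assert that this is ``harmless for the computation'' because the two axioms do not reference time signatures. That inference is not valid: the axioms not mentioning time signatures does not prevent the \emph{scheme} from depending on them (the Slush scheme in the paper is exactly such a scheme, and ordinality --- the only axiom that restricts time dependence --- is not among the hypotheses of this lemma). After relabeling you have a genuinely different round $P''$, and the claim that $\sum_{s\in P''}\alpha(s,(P'',\{P''\}))=\sum_{s\in P'}\alpha(s,(P',\{P'\}))$ is itself an instance of the lemma you are proving, so as written the argument is circular at this step. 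The fix is cheap and is what the paper does: instead of identifying the relabeled copy with $P'$, introduce one or two auxiliary rounds lying entirely after both $P$ and $P'$ in time, and chain the non-overlapping case --- $\sum_P$ equals the total on the auxiliary round, which in turn equals $\sum_{P'}$ --- so that every comparison is between rounds that can legitimately coexist in a single history. You should restate your overlap step in that form; the rest of your proof stands.
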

\begin{proof}
	See Appendix~\ref{lemma1_proof}.
\end{proof}

\begin{remark}\label{remark1}
	An immediate consequence of Lemma~\ref{lemma1} is that, if a scheme satisfies fixed total rewards and round based rewards then the fee must be the same for all histories and all rounds in these histories.
	Therefore, these two axioms, together with budget limit imply a stronger version of the budget limit\footnote{This is conceptually similar to the \textit{strong budget balance} in \cite{chen2019axiomatic}.} in the sense that for a scheme which satisfies  fixed total rewards and round based rewards and for any history $H=(S, \mathcal{P}(S))$ and any round $P\in \cal{P}(S)$, we have $\sum\limits_{s\in P} \alpha(s , H) =R$, with $R=B-f$ for some $f\in[0,B]$.
\end{remark}

Second, we observe that joint with ordinality, fixed total reward and round based rewards imply that shares with same relative rank in rounds of equal length get identical awards. Formally, 

\begin{lemma}\label{lemma2}
	If a scheme $\alpha$ satisfies fixed total reward, round based rewards and ordinality, then for any two histories $H=(S, \mathcal{P}(S))$ and $\bar{H}=(\bar{S}, \mathcal{\bar{P}}(\bar{S}))$ and any two rounds $P\in \mathcal{P}(S)$ and $\bar{P}\in \mathcal{\bar{P}}(\bar{S})$ such that $|P|=|\bar{P}|$, we have for all $s\in P$ and for all $\bar{s}\in \bar{P}$ such that $\rho(s)=\bar{\rho}(\bar{s})$:
	$$\alpha(s , H) = \alpha(\bar{s} , \bar{H}).$$
\end{lemma}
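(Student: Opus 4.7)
The plan is to reduce to single-round histories, establish that the reward at each rank is invariant under any single time-shift, and then transform one single-round history into another of the same length via a sequence of such shifts.

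First, round based rewards gives $\alpha(s, H) = \alpha(s, H|_r)$ and analogously for $\bar{H}$, so I can replace the lemma by the corresponding claim about the single-round histories $(P, \{P\})$ and $(\bar{P}, \{\bar{P}\})$ of common length $n$.

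Second, I establish the invariance: if $H^*$ is a time-shift of $H$ at $s_i$, replacing $s_i$ by $s'_i$, then $\alpha(s_i, H) = \alpha(s'_i, H^*)$. By Remark~\ref{remark1}, fixed total reward together with round based rewards makes the per-round total a constant $R$, so
\[
\sum_{s \in P_r} \alpha(s, H) \;=\; R \;=\; \sum_{s \in P_r^*} \alpha(s, H^*).
\]
Ordinality gives $\alpha(s, H) = \alpha(s, H^*)$ for every $s \neq s_i$, and cancelling these common summands leaves exactly the claim. Because a time-shift preserves the rank of every share, the reward at each rank is preserved under any finite composition of time-shifts.

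Third, I exhibit such a composition connecting $H$ to $\bar{H}$. Pick $T^* > \max(t_n, \bar{t}_n)$. In a first pass, iterate $i = n, n-1, \ldots, 1$ and, at each step, time-shift the current $i$-th share to $(\bar{h}_i, T^* + i)$: the target time lies above the unchanged predecessor at $t_{i-1}$ and below the already-repositioned successor at $T^* + i + 1$. After this pass the shares are $(\bar{h}_1, T^* + 1), \ldots, (\bar{h}_n, T^* + n)$ in order. In a second pass, iterate $i = 1, 2, \ldots, n$ and time-shift to $(\bar{h}_i, \bar{t}_i)$: now $\bar{t}_i$ lies above the just-placed predecessor at $\bar{t}_{i-1}$ and below the yet-unmoved successor at $T^* + i + 1 > \bar{t}_n \geq \bar{t}_i$. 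After both passes, the history coincides with $\bar{H}$. The main obstacle is precisely this construction: a naive single pass would fail because $\bar{t}_i$ can lie outside the current neighbor interval, and the two-pass ``push into a common future, then pull back'' trick resolves this. Combining the invariance with this transformation gives the lemma.
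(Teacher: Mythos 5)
Your proof is correct and takes essentially the same approach as the paper's: reduce to single-round histories via round based rewards, use ordinality together with the fixed per-round total (Lemma~\ref{lemma1}/Remark~\ref{remark1}) to conclude that a time-shift also preserves the award of the shifted share itself, and then chain time-shifts to identify the two histories. The only difference is in the bookkeeping of that chain --- the paper moves both histories toward a common intermediate by taking the minimum time at each rank in a single pass, whereas you move one history directly onto the other via a two-pass detour through a distant time window; both sequences respect the neighbor-interval constraint of the time-shift definition.
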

\begin{proof}
	See Appendix~\ref{lemma2_proof}.
\end{proof}

\section{Fair reward sharing schemes}\label{fair_sharing_section}
\subsection{Absolute Fairness}

In this section we single out a particular class of schemes, called  the class of {\it absolute fair schemes}. We show that this is the only such class that  satisfies absolute redistribution together with fixed total reward, budget limit, round based rewards, and ordinality.

\noindent
Specifically, a scheme belongs to this class if there exists  $\varepsilon: \mathbb{N}\rightarrow [0,1]$ with $\varepsilon(1)=1$ and  $\varepsilon(\rho(s))\geq \sum\limits_{i=\rho(s)+1}^{\infty}\frac{\varepsilon(i)}{i-1}$ such that,
\begin{equation} \label{afs}
	\alpha^{**}(s, H) = R\bigg(\varepsilon(\rho(s))- \sum\limits_{i=\rho(s)+1}^{|P(s)|}\frac{\varepsilon(i)}{i-1}\bigg)
\end{equation}

Note that for the last share $\rho(s)=|P(s)|$, therefore $\sum\limits_{i=|P(s)|+1}^{|P(s)|}\frac{\varepsilon(i)}{i-1}$ is an empty sum and by convention it equals $0$.

We can interpret absolute fair schemes in an iterative fashion: if there is only one share $s$, then  $\rho(s) = 1$ so $\varepsilon(1) = 1$ and the share receives the full net reward $R$. If we add a share $s'$, then $\rho(s') = 2$ and $\varepsilon(2) < 1$ so the first share $s$ gets $R(\varepsilon(1)-\varepsilon(2))$ and the second share $s'$ gets $R\varepsilon(2).$ Now, adding a third share $s''$ the award to both the first and second share should be reduced by the same amount $\delta$. Thus, $\varepsilon(3) = 2\delta$ and consequently the first share is awarded $R(\varepsilon(1) - \varepsilon(2) - \frac{\varepsilon(3)}{2})$ and the second share is awarded $R(\varepsilon(2) - \frac{\varepsilon(3)}{2}),$ and so forth, for any share in the round.


Next, we present our first main result. 

\begin{theorem}\label{absolute_thm}
	A reward sharing scheme $\alpha$ satisfies round based rewards, budget limit,  fixed total reward, ordinality and absolute redistribution  if and only if it is an absolute fair scheme in the sense  of~(\ref{afs}).
\end{theorem}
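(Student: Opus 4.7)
The plan is to prove both implications. The $(\Leftarrow)$ direction is a routine verification: from formula (\ref{afs}) the award depends only on $\rho(s)$, $|P(s)|$, and the constant $R$, which immediately gives round based rewards and ordinality; summing along a round of length $n$ yields a telescoping identity equal to $R$, establishing fixed total reward (and budget limit via $R\leq B$); and appending one share to a round of length $n$ changes each pre-existing allocation by the rank-independent amount $-R\varepsilon(n+1)/n$, which is exactly absolute redistribution. Non-negativity of (\ref{afs}) is built in via the stated condition on $\varepsilon$.

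For the $(\Rightarrow)$ direction, I would first apply Lemma \ref{lemma2} to conclude that $\alpha(s,H)$ depends only on the pair $(\rho(s), |P(s)|)$, and therefore define $a(k,n):=\alpha(s,H)$ whenever $\rho(s)=k$ and $|P(s)|=n$. Remark \ref{remark1} ensures that the per-round total $R=B-f$ is the same in every history, so fixed total reward becomes the clean identity $\sum_{k=1}^{n} a(k,n)=R$ for every $n\geq 1$. Translating absolute redistribution into $a$-notation, the quantity $a(k,n)-a(k,n+1)$ is independent of $k\in\{1,\dots,n\}$; call it $d(n)$. Subtracting the fixed-total identities for $n$ and $n+1$ then yields $a(n+1,n+1)=n\cdot d(n)$, which pins down the new share's award in terms of the common decrement experienced by the existing shares.

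Defining $\varepsilon(k):=a(k,k)/R$, the $n=1$ case forces $\varepsilon(1)=1$, and the previous step gives the recursion $a(k,n+1)=a(k,n)-R\,\varepsilon(n+1)/n$ for $k\leq n$. Unrolling this recursion starting from the diagonal value $a(k,k)=R\,\varepsilon(k)$ produces
$$a(k,n)=R\bigg(\varepsilon(k)-\sum_{j=k+1}^{n}\frac{\varepsilon(j)}{j-1}\bigg),$$
which is exactly formula (\ref{afs}). Because $\alpha$ takes values in $\mathbb{R}_+$ and rounds of arbitrary length can be built via successive extensions, letting $n\to\infty$ in $a(k,n)\geq 0$ yields the required inequality $\varepsilon(k)\geq \sum_{j=k+1}^{\infty}\varepsilon(j)/(j-1)$. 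The main obstacle will be the bookkeeping in passing from absolute redistribution and fixed total reward jointly to the correct identification of $a(n+1,n+1)$, and then verifying that the resulting $\varepsilon$ satisfies the full infinite-sum inequality rather than just every finite truncation; the latter requires the observation that an admissible scheme must cope with unboundedly long rounds.
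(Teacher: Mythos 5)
Your proposal is correct and follows essentially the same route as the paper's proof: reduce to single-round histories via round based rewards and Lemma~\ref{lemma2} so that awards depend only on $(\rho(s),|P(s)|)$, use Remark~\ref{remark1} for the per-round budget identity, extract the rank-independent decrement from absolute redistribution, define $\varepsilon$ from the diagonal awards (your $\varepsilon(k)=a(k,k)/R$ is exactly the paper's $(k-1)\delta_k/R$), and recover the infinite-sum constraint on $\varepsilon$ from non-negativity over arbitrarily long rounds. The only cosmetic difference is that you unroll the recursion explicitly where the paper runs a formal induction on round length.
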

\begin{proof}
	See Appendix \ref{absolute_thm_proof}.
\end{proof}

A prominent member of the class of absolute fair schemes is the proportionals scheme as we shall demonstrate below. 
\begin{prop}\label{absolute_proportional}
	The proportional scheme is an absolute fair scheme.
\end{prop}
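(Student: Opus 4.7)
The plan is to exhibit an explicit $\varepsilon$ function that realizes the proportional scheme as an instance of formula~(\ref{afs}), and then verify the three requirements on $\varepsilon$. A natural candidate, motivated by the telescoping structure of the inner sum, is
$$\varepsilon(i)=\frac{1}{i}.$$

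First I would check the defining conditions on $\varepsilon$: clearly $\varepsilon(1)=1$ and $\varepsilon(i)\in[0,1]$ for all $i\in\mathbb{N}$. For the tail-sum inequality, using partial fractions $\tfrac{1}{i(i-1)}=\tfrac{1}{i-1}-\tfrac{1}{i}$, the series telescopes:
$$\sum_{i=\rho(s)+1}^{\infty}\frac{\varepsilon(i)}{i-1}=\sum_{i=\rho(s)+1}^{\infty}\frac{1}{i(i-1)}=\frac{1}{\rho(s)}=\varepsilon(\rho(s)),$$
so the required inequality $\varepsilon(\rho(s))\geq \sum_{i=\rho(s)+1}^{\infty}\varepsilon(i)/(i-1)$ holds with equality.

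Next I would plug $\varepsilon(i)=1/i$ into the absolute-fair formula~(\ref{afs}) for a share $s$ in a round of length $n=|P(s)|$ with rank $\rho=\rho(s)$ and apply the same telescoping identity on the finite sum:
$$\varepsilon(\rho)-\sum_{i=\rho+1}^{n}\frac{\varepsilon(i)}{i-1}=\frac{1}{\rho}-\sum_{i=\rho+1}^{n}\left(\frac{1}{i-1}-\frac{1}{i}\right)=\frac{1}{\rho}-\left(\frac{1}{\rho}-\frac{1}{n}\right)=\frac{1}{n}.$$
Thus $\alpha^{**}(s,H)=R/|P(s)|$, which is exactly the proportional scheme as defined in Example~\ref{exm2} (taking $R$ to be the constant net reward, which is well-defined under fixed total reward and round based rewards by Lemma~\ref{lemma1} and Remark~\ref{remark1}).

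There is no real obstacle here beyond guessing the right $\varepsilon$; the argument is entirely a computation once the choice $\varepsilon(i)=1/i$ is made. The only minor subtlety worth flagging is that the tail condition holds with equality, which is the boundary case of the admissibility requirement in~(\ref{afs}) — this reflects the fact that the proportional scheme distributes the entire net reward in every round and leaves no ``slack'' for later shares.
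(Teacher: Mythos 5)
Your proof is correct and follows essentially the same route as the paper's: choose $\varepsilon(i)=1/i$ and telescope the sum in~(\ref{afs}) to obtain $R/|P(s)|$. The only difference is that you also explicitly verify the admissibility condition $\varepsilon(\rho(s))\geq \sum_{i=\rho(s)+1}^{\infty}\varepsilon(i)/(i-1)$ (holding with equality), a check the paper's proof omits; this is a welcome addition rather than a divergence.
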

\begin{proof}
	Let $H=(S,\mathcal{P}(S))$ be any history.  It is easy to see the proportional scheme satisfies the round-baseness condition, hence we only consider a history with a single round, i.e., $H=(\big\{s_1,\dots,s_k\big\}, \big\{\{s_1,\dots,s_k\}\big\})$. Let  $\varepsilon(\rho(s_i))=\frac{1}{\rho(s_i)}$ for all $i\in\{1,\dots,k\}$. Note that $\rho(s_i)=i$. Therefore, for any share $s_j$ we have:
	\begin{align*}
		\alpha^{**}(s_j, H)&=R\bigg(\varepsilon(\rho(s_j))- \sum\limits_{i=\rho(s_j)+1}^{|P(s_j)|}\frac{\varepsilon(i)}{i-1}\bigg)=R\bigg(\frac{1}{j}- \sum\limits_{i=j+1}^{k}\frac{1}{i\times (i-1)}\bigg)\\
		&=R\bigg(\frac{1}{j}- \sum\limits_{i=j+1}^{k}\Big(\frac{1}{ i-1}-\frac{1}{i}\Big)\bigg)=R\bigg(\frac{1}{j}- \Big(\frac{1}{j}-\frac{1}{k}\Big)\bigg)=\frac{R}{k}
	\end{align*}
	Note that for the last share $\sum\limits_{i=k+1}^{k}\varepsilon(i)$ is an empty sum and by convention it equals $0$.
\end{proof}

\subsection{Relative Fairness}
In this section we single out a particular class of schemes, called  the class of {\it relative fair schemes}. We show that this is the only such class that  satisfies relative redistribution together with fixed total reward, budget limit, round based rewards, and ordinality.

\noindent
Specifically, a scheme belongs to this class if there exists  $\varepsilon(j): \mathbb{N}\rightarrow [0,1]$ with $\varepsilon(1)=1$ such that,
\begin{equation}\label{rfs}
	\alpha^*(s, H) =
	R \varepsilon(\rho(s))  \prod\limits_{j=\rho(s)+1}^{|P(s)|}\big(1-\varepsilon(j)\big)
\end{equation}

Note that for the last share $\rho(s)=|P(s)|$, therefore $\prod\limits_{j=|P(s)|+1}^{|P(s)|}(1-\varepsilon(j))$ is an empty product and by convention it equals $1$.

An iterative interpretation of relative fair schemes is as follows: if there is only one share $s$, $\varepsilon(1)=1$ and the share is awarded the entire net reward $R$. If we add a second share $s'$, then $s'$ will be awarded $R\varepsilon(2)$ and the  first share must get $R(\varepsilon(1)(1-\varepsilon(2))) = R(1- \varepsilon(2)).$ Now, adding  third share $s''$ we must now have that both the award to the first share and the second share change by the same ratio $\delta$ so $(1-\varepsilon(2))\delta + \varepsilon(2)\delta + \varepsilon(3) = 1$ implying that $\delta = 1- \varepsilon(3)$ yielding the award $R((1-\varepsilon(2))(1-\varepsilon(3)))$ to the first share and $R(\varepsilon(2)(1-\varepsilon(3)))$ to the second share,  and finally $R\varepsilon(3)$ to the third share, and so forth for any share in the round. 

We now present our second main result characterizing the class of relative fair reward sharing schemes.

\begin{theorem}\label{relative_thm}
	A reward scheme $\alpha$ satisfies round based rewards, budget limit,  fixed total reward, ordinality and relative redistribution  if and only if it is a relative fair scheme in the sense  of (\ref{rfs}).
\end{theorem}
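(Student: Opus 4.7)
I would prove the two directions separately, with the non-trivial work in the ``only if'' direction, using Lemmas~\ref{lemma1} and~\ref{lemma2} as the main workhorses.

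\textbf{Sufficiency.} Take any $\varepsilon\colon\mathbb{N}\to[0,1]$ with $\varepsilon(1)=1$ and check the five axioms for $\alpha^{*}$ defined by~(\ref{rfs}). Round based rewards and ordinality are immediate since $\alpha^{*}(s,H)$ depends only on $\rho(s)$ and $|P(s)|$. For fixed total reward I would establish the identity $\sum_{i=1}^{k}\varepsilon(i)\prod_{j=i+1}^{k}(1-\varepsilon(j))=1$ by induction on $k$, using the recursion $f(k)=\varepsilon(k)+(1-\varepsilon(k))f(k-1)$ obtained by pulling the common factor $(1-\varepsilon(k))$ out of the first $k-1$ summands; budget limit then follows from $R\le B$. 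For relative redistribution, if $H'$ extends $H$ at a length-$k$ round by one share, then for every $s_i$ in the round one gets $\alpha^{*}(s_i,H')/\alpha^{*}(s_i,H)=1-\varepsilon(k+1)$, which is independent of $i$.

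\textbf{Necessity.} Suppose $\alpha$ satisfies the five axioms. By Lemma~\ref{lemma2} there is a well-defined function $a(i,k)$ with $\alpha(s,H)=a(\rho(s),|P(s)|)$, and Lemma~\ref{lemma1} combined with fixed total reward and budget limit yields $\sum_{i=1}^{k}a(i,k)=R$ for one constant $R\in[0,B]$. Set $\varepsilon(k):=a(k,k)/R\in[0,1]$, so that $\varepsilon(1)=1$. The key claim is the inductive identity
\[
a(i,k)\;=\;R\,\varepsilon(i)\prod_{j=i+1}^{k}(1-\varepsilon(j)),\qquad 1\le i\le k.
\]
The base $k=1$ is trivial. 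For the inductive step I would consider an extension of any length-$k$ round to length $k+1$: relative redistribution forces a common ratio $\delta=a(i,k+1)/a(i,k)$ across all $i\le k$ with $a(i,k)>0$; summing awards in the length-$(k+1)$ round and using $a(k+1,k+1)=R\varepsilon(k+1)$ together with $\sum_{i:a(i,k)>0}a(i,k)=R$ pins down $\delta=1-\varepsilon(k+1)$, and substituting the inductive hypothesis gives the desired formula for $a(i,k+1)$.

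\textbf{Main obstacle.} The delicate point is the treatment of positions $i$ with $a(i,k)=0$, since relative redistribution is silent about ratios involving a zero-award share and so does not directly constrain $a(i,k+1)$. By the inductive hypothesis, $a(i,k)=0$ can occur only when $\varepsilon(i)=0$ or $\varepsilon(j)=1$ for some $i<j\le k$, and in both cases the formula itself predicts $a(i,k+1)=0$; the residual task is to confirm that no positive value is consistent with the remaining axioms plus non-negativity and the total-reward equation. I expect this combinatorial bookkeeping of the zero set of $\varepsilon$ to be the most delicate step, while the ratio recursion on the positive shares is the conceptual core of the argument.
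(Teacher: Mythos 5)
Your plan reproduces the paper's proof almost step for step: the same verification of the axioms for $\alpha^*$ in the sufficiency direction (your recursion $f(k)=\varepsilon(k)+(1-\varepsilon(k))f(k-1)$ is a slightly cleaner route to $\sum_{i=1}^{k}\varepsilon(i)\prod_{j=i+1}^{k}(1-\varepsilon(j))=1$ than the paper's telescoping), the same reduction via round based rewards and Lemmas~\ref{lemma1} and~\ref{lemma2} to a function $a(i,k)$ of rank and round length with $\sum_{i=1}^k a(i,k)=R$, and the same induction on round length driven by the common ratio forced by relative redistribution. On the positive-award shares your argument is complete and matches the paper's.

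The one point you defer --- the ``main obstacle'' of shares with $a(i,k)=0$ --- is, however, a genuine gap, and the residual task as you have framed it cannot be discharged from the five axioms as literally stated. Relative redistribution only constrains pairs of shares whose awards are both nonzero, so once a share's award reaches zero, no axiom ties its award after an extension to its award before. Concretely, the rank-and-length-based scheme with $a(1,2)=0$, $a(2,2)=R$, and $a(1,k)=a(2,k)=R/2$, $a(i,k)=0$ for $i\geq 3$, for every $k\geq 3$, satisfies round based rewards, ordinality, fixed total reward, budget limit, and relative redistribution (at length $2$ only one share is nonzero, so the ratio condition is vacuous for the extension to length $3$; at lengths $k\geq 3$ the two nonzero shares keep a common ratio of $1$), yet it is not of the form~(\ref{rfs}): $a(2,2)=R$ forces $\varepsilon(2)=1$, whence $\alpha^*(s_1,H^3)=R\varepsilon(1)(1-\varepsilon(2))(1-\varepsilon(3))=0\neq R/2$. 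The paper's own proof steps over this silently --- in the inductive step it asserts $\alpha(s_i,H^{k+1})/\alpha(s_i,H^{k})=\alpha(s_j,H^{k+1})/\alpha(s_j,H^{k})$ for \emph{all} $i,j\leq k$ and then writes $\alpha(s_i,H^{k+1})=(1-\delta_{k+1})\alpha(s_i,H^{k})$, which implicitly extends the axiom to zero-award shares. To complete your proof you must make that stronger reading explicit (the award vector of the old round is rescaled by a single common factor, zeros included), or otherwise rule out zero awards before the last share; the combinatorial bookkeeping you propose on the zero set of $\varepsilon$ will not, by itself, force $a(i,k+1)=0$.
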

\begin{proof}
	See Appendix \ref{relative_thm_proof}.
\end{proof}

It turns out that the proportional scheme also takes up a prominent position among the relative fair schemes as recorded below.
\begin{prop}\label{relative_proportional}
	The proportional scheme is a relative fair scheme.
\end{prop}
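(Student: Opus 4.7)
The plan is to mirror the proof of Proposition 1, showing that the proportional scheme can be written in the functional form (\ref{rfs}) for a suitable choice of $\varepsilon$. First, I would observe that the proportional scheme satisfies round based rewards (by inspection of Example 2, the award $R/|P(s)|$ depends only on the round containing $s$), so it suffices to restrict to a single-round history $H = (\{s_1,\dots,s_k\}, \{\{s_1,\dots,s_k\}\})$, in which $\rho(s_j) = j$ and $|P(s_j)| = k$ for every $j$.

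Next, I would guess and verify the natural candidate $\varepsilon(j) = \tfrac{1}{j}$, which clearly satisfies $\varepsilon(1) = 1$ and $\varepsilon(j) \in [0,1]$ for every $j \in \mathbb{N}$. Substituting into (\ref{rfs}) yields
\begin{equation*}
\alpha^{*}(s_j, H) = R \cdot \frac{1}{j} \prod_{i=j+1}^{k}\left(1 - \frac{1}{i}\right) = \frac{R}{j}\prod_{i=j+1}^{k}\frac{i-1}{i}.
\end{equation*}

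The only actual computation is to observe the telescoping: $\prod_{i=j+1}^{k}\frac{i-1}{i} = \frac{j}{j+1}\cdot\frac{j+1}{j+2}\cdots\frac{k-1}{k} = \frac{j}{k}$, so that $\alpha^{*}(s_j, H) = \frac{R}{j}\cdot\frac{j}{k} = \frac{R}{k}$, which matches Example 2. For the last share, the product is empty and equals $1$, and one recovers $R\varepsilon(k) = R/k$ directly, consistent with the convention already used in the paper.

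There is no real obstacle here; the step that requires the slightest bit of attention is simply recognizing the telescoping structure of the product $\prod_{i=j+1}^{k}(i-1)/i$ (essentially the multiplicative analogue of the telescoping sum that appeared in the proof of Proposition 1). Once that is in hand, the verification is immediate and the proposition follows.
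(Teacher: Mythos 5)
Your proposal is correct and follows essentially the same route as the paper: restrict to a single-round history via round-basedness, set $\varepsilon(j)=\tfrac{1}{j}$, and telescope the product $\prod_{i=j+1}^{k}\tfrac{i-1}{i}=\tfrac{j}{k}$ to recover $\tfrac{R}{k}$. No gaps.
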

\begin{proof}
	Let $H=(S,\mathcal{P}(S))$ be any history.  It is easy to see the proportional scheme satisfies the round-baseness condition, hence we only consider a history with a single round, i.e., $H=(\big\{s_1,\dots,s_k\big\}, \big\{\{s_1,\dots,s_k\}\big\})$. Let  $\varepsilon(\rho(s_i))=\frac{1}{\rho(s_i)}$ for all $i\in\{1,\dots,k\}$. Note that $\rho(s_i)=i$. Therefore, for any share $s_j$ we have:
	\begin{align*}
		\alpha^*(s_j, H) &= R\times \varepsilon(\rho(s_j)) \times\prod_{i=\rho(s_j)+1}^{k}(1-\varepsilon(i))\\
		& =R\times(\frac{1}{j})\times(1-\frac{1}{j+1})\times(1-\frac{1}{j+2})\times\dots\times(1-\frac{1}{k})\\
		&=R\times (\frac{1}{j})\times(\frac{j}{j+1})\times(\frac{j+1}{j+2})\times\dots\times (\frac{k-2}{k-1})\times (\frac{k-1}{k})=\frac{R}{k}
	\end{align*}
	Note that for the last share $\prod\limits_{i=k+1}^{k}(1-\varepsilon(i))$ is \textit{empty product} and by convention equals $1$.
\end{proof}

\subsection{Consensus between  absolute and relative redistribution}\label{intersection_section}

Next, we search for a possible consensus between the two fairness concepts that we discussed in the previous sections. We find that the two aforementioned classes of absolute and relative fair schemes are not mutually exclusive. Therefore, imposing both fairness axioms together with the others conditions characterize a new class of schemes at the intersection of the former two classes. We call this new class of schemes \emph{$k$-pseudo proportional schemes} and find that this class is also a generalization of the well-known proportional reward scheme mentioned in Example~\ref{exm2}. Formally, given any $k$ and $\delta$, a $k$-pseudo proportional reward scheme assigns awards to shares in a round identical to the proportional scheme, $\frac{R}{|P(S)|}$, so long as the round is shorter than $k$, i.e., $|P(S)|<k$. In case the round has more shares than $k$, then the scheme assigns an award of i) $\delta$ to the $k^{th}$ share, ii) distributes the rest, $R-\delta$, to the first $k-1$ shares, and iii) awards $0$ to any share that is submitted after $k$. The general structure of the $k$-pseudo proportional scheme with $k>1$ is as follows:

%

\begin{equation}\label{pseudo_prop_def}
	\alpha^{k,\delta}(s,H) = \left\{\begin{array}{ll}
		\frac{R}{|P(s)|}, & \text{ if } |P(s)|<k\\
		\frac{R-\delta}{k-1}, & \text{ if } |P(s)|\geq k\text{ and }\rho(s)< k\\
		\delta, & \text{ if } |P(s)|\geq k \text{ and }\rho(s)=k\\
		0, & \text{ if } |P(s)|\geq k \text{ and } \rho(s)>k
	\end{array}\right.
\end{equation}
for $0\leq \delta \leq R$.


The following theorem shows that the class of $k$-pseudo proportional schemes is the only one at the intersection of relative fair schemes and absolute fair schemes.
\begin{theorem}\label{thm3}
	A reward sharing scheme satisfies round based rewards, budget limit,  fixed total reward, ordinality, absolute redistribution, and relative redistribution if and only if it is $k$-pseudo proportional in the sense of (\ref{pseudo_prop_def}). 
\end{theorem}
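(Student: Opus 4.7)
The plan is to leverage Theorems~\ref{absolute_thm} and~\ref{relative_thm} simultaneously: any scheme satisfying all six axioms must admit both representations~(\ref{afs}) and~(\ref{rfs}) at once, and equating them will pin down the underlying function almost completely. For the easy (``if'') direction, one checks directly from~(\ref{pseudo_prop_def}) that every $k$-pseudo proportional scheme satisfies the six axioms; round based rewards, ordinality, budget limit, and fixed total reward are immediate, while absolute and relative redistribution reduce to comparing rewards at round lengths $n$ and $n+1$ in the three regimes $n+1<k$, $n+1=k$, and $n+1>k$, where the first two reduce to the proportional calculation and the third is trivial because no reward changes.

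For the ``only if'' direction, suppose $\alpha$ satisfies all six axioms, and let $\varepsilon_a,\varepsilon_r:\mathbb{N}\to[0,1]$ be the functions delivered by Theorems~\ref{absolute_thm} and~\ref{relative_thm}. First I would evaluate both formulas on the last share of a round of length $n$, where the sum in~(\ref{afs}) and the product in~(\ref{rfs}) are both empty, obtaining $R\,\varepsilon_a(n)=R\,\varepsilon_r(n)$ for every $n$, so the two functions coincide; write $\varepsilon$ for their common value. Then evaluating the two representations on the second-to-last share of a round of length $n\geq 2$ and equating them yields, after cancellation, the key functional equation
\[\varepsilon(n)\Bigl(\tfrac{1}{n-1}-\varepsilon(n-1)\Bigr)=0\qquad\text{for every }n\geq 2.\]

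This equation forces a dichotomy. Either $\varepsilon(m)=1/m$ for every $m\geq 1$, in which case substituting into~(\ref{afs}) and applying the telescoping identity $\sum_{i=j+1}^{n}\frac{1}{i(i-1)}=\frac{1}{j}-\frac{1}{n}$ yields $\alpha(s,H)=R/|P(s)|$, the proportional scheme (the $k\to\infty$ limiting case of~(\ref{pseudo_prop_def})). Otherwise there is a smallest $k\geq 2$ with $\varepsilon(k)\neq 1/k$; then the equation evaluated at $n=k+1$ forces $\varepsilon(k+1)=0$, and a straightforward induction on $l>k$ propagates $\varepsilon(l)=0$ for all $l>k$, while $\varepsilon(m)=1/m$ for $m<k$ by minimality of $k$. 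Setting $\delta:=R\,\varepsilon(k)\in[0,R]$ and substituting this $\varepsilon$ back into~(\ref{afs})---separating the cases $|P(s)|<k$, $|P(s)|\geq k$ with $\rho(s)<k$, $\rho(s)=k$, and $\rho(s)>k$, and using the same telescoping identity truncated at $k-1$---recovers exactly the piecewise formula~(\ref{pseudo_prop_def}). The main obstacle is this final piece of bookkeeping: one must handle the truncated telescoping sum, show that $\delta$ is the only surviving degree of freedom, and verify that $\delta\in[0,R]$ is consistent with the admissibility inequality $\varepsilon(\rho(s))\geq\sum_{i>\rho(s)}\varepsilon(i)/(i-1)$ inherited from Theorem~\ref{absolute_thm}.
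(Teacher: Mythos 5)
Your proposal is correct, and it takes a genuinely different route from the paper's. The paper proves the ``only if'' direction from scratch: it iterates on the round length, and at the first length $h$ where the awards cease to be equal it sets $k=h$ and $\delta=\alpha(s_h,H^h)$, then uses the two redistribution axioms in tandem (equal absolute decrements together with a common ratio $\theta$ force $\theta=1$ once the awards in a round are unequal) to show that the award vector freezes, with zeros appended, for all longer rounds; it relies only on Lemma~\ref{lemma2} and Remark~\ref{remark1}. You instead treat Theorems~\ref{absolute_thm} and~\ref{relative_thm} as black boxes: evaluating~(\ref{afs}) and~(\ref{rfs}) on the last share of a round of length $n$ gives $R\varepsilon_a(n)=\alpha(s_n,H^n)=R\varepsilon_r(n)$, so the two representing functions coincide, and the second-to-last share yields $\varepsilon(n-1)-\frac{\varepsilon(n)}{n-1}=\varepsilon(n-1)(1-\varepsilon(n))$, i.e.\ exactly your functional equation $\varepsilon(n)\bigl(\tfrac{1}{n-1}-\varepsilon(n-1)\bigr)=0$. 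I checked the resulting dichotomy and the back-substitution: with $\varepsilon(m)=1/m$ for $m<k$, $\varepsilon(k)=\delta/R$ and $\varepsilon(l)=0$ for $l>k$, the truncated telescoping sum gives $R(\tfrac1j-(\tfrac1j-\tfrac1{k-1})-\tfrac{\varepsilon(k)}{k-1})=\tfrac{R-\delta}{k-1}$ for ranks $j<k$ in long rounds, $\delta$ at rank $k$, and $0$ beyond, recovering~(\ref{pseudo_prop_def}); note also that the constraint at $n=k$ is automatically satisfied because $\varepsilon(k-1)=\tfrac{1}{k-1}$ kills the second factor, which is precisely why $\delta$ survives as a free parameter in $[0,R]$. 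Your argument is shorter and more conceptual, since the combinatorial work is amortized in the two earlier characterizations (and it makes transparent why the intersection of the two classes is so small); the paper's argument is self-contained and does not require the full strength of both representation theorems. Both are valid proofs of Theorem~\ref{thm3}.
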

\begin{proof}
	See Appendix~\ref{thm3_proof}.
\end{proof}

Next, we consider a property that ensures strictly positive awards for all shares, for any history. To ensure that all shares get paid can be seen as a fairness requirement since all shares involve "work", but it also relates to miners' incentives. If miners are not paid for their shares they may stop mining for the pool: in particular, if they are not paid after a certain number of shares in a round as in the $k$-pseudo-proportional schemes.\footnote{A weaker version of the axiom would require only strict positivity of the last share in a round. This would likely be enough to incentivize miners to keep mining for the pool, e.g., as in the Pay-Per-Last-N-Shares scheme.}
We write this feature as an additional axiom and provide the resulting characterization of the proportional scheme as a corollary to Theorem~\ref{thm3}, in effect highlighting that the proportional scheme is the only member of the family (\ref{pseudo_prop_def}) that is compatible with miners' incentives to keep mining for the pool.


$\bullet$ \textbf{Strict positivity:}  A scheme $\alpha$ satisfies strict positivity whenever, for any history $H=(S, \mathcal{P}(S)),$ and any round $P\in \cal{P}(S)$, we have:
$$\alpha(s,H)\in \mathbb{R}_{++}.$$

\begin{corollary}
	The proportional rule is the only rule that satisfies round based rewards, budget limit, fixed total reward, ordinality, absolute redistribution, relative redistribution, and strict positivity.
\end{corollary}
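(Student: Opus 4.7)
The plan is to invoke Theorem~\ref{thm3} to reduce the problem to the family of $k$-pseudo proportional schemes, and then use strict positivity to eliminate every non-proportional member of this family.

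Since $\alpha$ satisfies the six axioms of round based rewards, budget limit, fixed total reward, ordinality, absolute redistribution, and relative redistribution, Theorem~\ref{thm3} implies that $\alpha$ must coincide with some $\alpha^{k,\delta}$ defined in~(\ref{pseudo_prop_def}) for a natural number $k\geq 1$ and a parameter $\delta\in[0,R]$. The task is then to show that among this family, only the proportional scheme is compatible with strict positivity.

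Next, I would show that no finite $k$ is compatible with strict positivity. Since the framework does not constrain round lengths, I can construct a history $H$ whose $r^{th}$ round $P_r$ has length exactly $k+1$. The last share $s^*$ of $P_r$ then satisfies $|P(s^*)|\geq k$ and $\rho(s^*)=k+1>k$, placing it in the fourth branch of~(\ref{pseudo_prop_def}), so $\alpha^{k,\delta}(s^*, H)=0$, in direct contradiction with strict positivity. Independently, the boundary choices $\delta=0$ and $\delta=R$ force zero awards on the share at rank $k$ or on the shares at ranks $1,\dots,k-1$, respectively, whenever a round of length at least $k$ exists, so both extremes also fail.

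The only remaining possibility is that the branches of~(\ref{pseudo_prop_def}) involving $|P(s)|\geq k$ are never triggered, i.e.\ every round is strictly shorter than $k$. Since $k$ is a fixed natural number while round lengths in the model are unbounded, this can hold only in the degenerate limit where the first branch applies to every share in every history, i.e.\ $\alpha(s,H)=R/|P(s)|$ throughout, which is exactly the proportional scheme. Conversely, the proportional scheme satisfies the first six axioms by Theorem~\ref{thm3} (via Propositions~\ref{absolute_proportional} and~\ref{relative_proportional}) and trivially satisfies strict positivity whenever $R>0$. The only delicate point is the mild informal step of arguing that the $k$-pseudo proportional family contains the proportional scheme only as an ``unbounded-$k$'' limit; once this is spelled out, the conclusion is a direct case-by-case inspection of~(\ref{pseudo_prop_def}).
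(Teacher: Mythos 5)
Your proposal is correct and follows essentially the same route as the paper: apply Theorem~\ref{thm3} to reduce to the $k$-pseudo proportional family, then rule out every finite $k$ by exhibiting a round of length exceeding $k$ whose trailing share receives a zero award, leaving only the $k=\infty$ (proportional) case. The extra case analysis for $\delta=0$ and $\delta=R$ is harmless but redundant, since the length-$(k+1)$ round argument already eliminates all finite $k$ regardless of $\delta$.
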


\begin{proof} By Theorem~\ref{thm3} any rule that satisfies the first six axioms, is a $k$-pseudo proportional rule. As shown in the proof of Theorem~\ref{thm3}, in case $k=\infty$ this completes the proof.    Suppose for a contradiction, $k$ is finite. Consider any history with a round $P_r$ such that $|P_r|>k$. Obviously $\alpha^{k,\delta}(s_{k+1})=0$ which contradicts strict positivity. 
\end{proof}

\section{Conclusion}\label{dis_section}
This paper provides, for the first time, a rich framework for reward sharing schemes in mining pools through an economic design perspective. To demonstrate the flexibility in the design, we proposed various desirable axioms and put particular emphasis on fairness concepts. We provided three different characterizations of classes within this framework, i.e., absolute fair, relative fair, and $k$-pseudo proportional schemes.

In Appendix \ref{formal_rules}, we show that the framework also allows the formalization of various applied schemes and investigate these schemes axiomatically. The results are summarized in Table~\ref{tab2}. We also show that the provided axioms are logically independent in Appendix \ref{independence_section} and provide a summary in Table~\ref{tab1}.



\newpage
\bibliographystyle{chicago}
\bibliography{sample}

\newpage
\section*{Appendix}
\appendix
\section{Proof of Lemma \ref{lemma1}}\label{lemma1_proof}
\textbf{Lemma 1.}
If a scheme $\alpha$ satisfies fixed total reward and round based rewards, then for any two histories $H=(S, \mathcal{P}(S))$ and $H'=(S', \mathcal{P}'(S'))$ and any two rounds $P\in \mathcal{P}(S)$ and $P'\in \mathcal{P}'(S')$ we have
$$\sum\limits_{s\in P} \alpha(s , H) = \sum\limits_{s\in P'} \alpha(s , H').$$
\begin{proof}
	Take	 any two histories $H, H'$ and $P,P'$ as in the lemma. Let $r$ (and $r'$) denote the round number of $P$ (and $P'$) in history $H$ (and $H'$). As $\alpha$ is fixed total reward, there exist fixed rewards for both histories, say $K$ and $K'$: 
	\begin{equation}\label{lemma_1_eq_1}
		\sum\limits_{s\in P} \alpha(s , H)=K\text{ and }\sum\limits_{s\in P'} \alpha(s , H')=K'
	\end{equation}
	
	Consider the restriction of $H$ to $r^{th}$ round (and of $H'$ to $r'^{th}$ round). As $\alpha$ satisfies round based rewards, for all $s\in P$ (and for all $s'\in P'$) we have: $\alpha(s, H) = \alpha(s , H|_r) \text{ and } \alpha(s', H') = \alpha(s' , H'|_{r'})$. This implies:
	
	\begin{equation}\label{lemma_1_eq_2}
		\sum\limits_{s\in P} \alpha(s , H|_r) = K \text{ and } \sum\limits_{s\in P'} \alpha(s , H'|_{r'})=K'
	\end{equation}

	Next we consider two cases. We say two rounds in different histories \emph{overlap} when there are some shares in the these two rounds which have overlapping time signatures. Formally, we say $P$ and $P'$ are overlapping whenever: $\min\limits_{s\in P'}\tau(s)\leq\min\limits_{s\in P}\tau(s)\leq \max\limits_{s\in P'}\tau(s)$ or  $\min\limits_{s\in P'}\tau(s)\leq\max\limits_{s\in P}\tau(s)\leq \max\limits_{s\in P'}\tau(s)$. Based on whether these rounds $P$ and $P'$ in histories $H$ and $H'$ overlap or not, we shall separately prove the equality of rewards, $K=K'$, in Equation~\ref{lemma_1_eq_1}. 	
	
	\textbf{Case 1}. No overlap. Without loss of generality, assume all the shares in $P$ have an earlier time signature than those in $P'$. In this case, consider a history with two rounds $H''=(P\cup P', \{P, P'\})$. Note that $H''$ is a history consisting only of two rounds, $P$ as the first and $P'$ as the second round. Note also that $\max\limits_{s\in P}\tau(s) < \min\limits_{s\in P'}\tau(s)$.  
	
	As $\alpha$ is fixed total reward, there exist a fixed reward for this history, say $K''$:
	\begin{equation}\label{lemma_1_eq_1_case_1}
		\sum\limits_{s\in P} \alpha(s , H'')=\sum\limits_{s\in P'} \alpha(s , H'')=K''
	\end{equation}
	
	Consider the restriction of $H''$ to the $1^{st}$ and the $2^{nd}$ rounds.  As $\alpha$ satisfies round based rewards for all $s\in P$ we have $\alpha(s, H'') = \alpha(s , H''|_1)$, and for all  $s\in P'$ we have $\alpha(s, H'') = \alpha(s , H''|_2)$. This implies:
	
	\begin{equation}\label{lemma_1_eq_2_case_1}
		\sum\limits_{s\in P} \alpha(s , H''|_1)=K'' \text{ and } \sum\limits_{s\in P'} \alpha(s , H''|_2)=K''.
	\end{equation}
	
	Note that the restriction of $H''$ to the $1^{st}$ round is equivalent to the restriction of $H$ to the $r^{th}$ round. Similarly the restriction of $H''$ to the $2^{nd}$ round is equivalent to the restriction of $H'$ to the $r'^{th}$ round. That is, $H''|_1=(P, \{P\})= H|_r$ and $H''|_2= (P',\{P'\})=H'|_{r'}$, therefore we have:
	
	$$\sum\limits_{s\in P} \alpha(s , H''|_1)=\sum\limits_{s\in P} \alpha(s , H|_r) \text{ and } \sum\limits_{s\in P'} \alpha(s , H''|_2)=\sum\limits_{s\in P'} \alpha(s , H'|_{r'})$$
	
	The former equality implies $K''=K$ and the latter implies $K''=K'$. This completes the proof for Case 1.

	\textbf{Case 2}. Overlap. In this case we consider two additional histories with two rounds each: $\overline{H}=(P\cup \overline{P}, \{P, \overline{P}\})$ and $\overline{H}'=(P'\cup\overline{P}', \{P', \overline{P}'\})$ such that 
	$$\max\limits_{s\in P\cup P'}\tau(s)<\min\limits_{s\in  \overline{P}}\tau(s) < \max\limits_{s\in  \overline{P}}\tau(s)< \min\limits_{s\in  \overline{P}'}\tau(s).$$
	Remark that $\overline{P}$ and $\overline{P}'$ are constructed such that neither overlaps with $P$ or $P'$. Hence we can construct the two histories above, i.e., $\overline{H}$ and $\overline{H}'$. Note also that $\overline{H}$ is a history consisting only of two rounds, $P$ as the first and $\overline{P}$ as the second round. Similarly, $\overline{H}'$ is a history consisting only of two rounds, $P'$ as the first and $\overline{P}'$ as the second round. 
	
	%
	%
	%

	Since $\max\limits_{s\in P\cup P'}\tau(s)<\min\limits_{s\in  \overline{P}}\tau(s)$ there is no overlaps between $P$ and $\overline{P}$, therefore applying Case 1 on $H$ and $\overline{H}$ yields:
	\begin{equation}\label{lemma_1_firsthistory}
		i) \sum\limits_{s\in P} \alpha(s , H) = \sum\limits_{s\in\overline{P}} \alpha(s , \overline{H})  
	\end{equation}
	
	Since $\max\limits_{s\in P\cup P'}\tau(s)< \min\limits_{s\in  \overline{P}'}\tau(s)$ there is no overlaps between $P'$ and $\overline{P}'$, therefore applying Case 1 on $H'$ and $\overline{H}'$ yields:
	\begin{equation}\label{lemma_1_secondhistory}
		ii) \sum\limits_{s\in P'} \alpha(s , H') = \sum\limits_{s\in\overline{P}'} \alpha(s , \overline{H}').
	\end{equation}

	Since $\max\limits_{s\in  \overline{P}}\tau(s)< \min\limits_{s\in  \overline{P}'}\tau(s)$ there is no overlaps between $\overline{P}$ and $\overline{P}'$, therefore applying Case 1 on $\overline{H}$ and $\overline{H}'$  yields:
	\begin{equation*}\label{lemma_1_combining_the_two}
		\sum\limits_{s\in\overline{P}} \alpha(s , \overline{H}) = \sum\limits_{s\in\overline{P}'} \alpha(s , \overline{H}').
	\end{equation*}
	
	which -combined with Equations~\ref{lemma_1_firsthistory} and~\ref{lemma_1_secondhistory}- completes the proof. 
\end{proof}

\section{Proof of Lemma \ref{lemma2}}\label{lemma2_proof}
\textbf{Lemma 2.}
If a scheme $\alpha$ satisfies fixed total reward, round based rewards and ordinality, then for any two histories $H=(S, \mathcal{P}(S))$ and $\bar{H}=(\bar{S}, \mathcal{\bar{P}}(\bar{S}))$ and any two rounds $P\in \mathcal{P}(S)$ and $\bar{P}\in \mathcal{\bar{P}}(\bar{S})$ such that $|P|=|\bar{P}|$, we have for all $s\in P$ and for all $\bar{s}\in \bar{P}$ such that $\rho(s)=\bar{\rho}(\bar{s})$:
$$\alpha(s , H) = \alpha(\bar{s} , \bar{H}).$$
\begin{proof}
	Take any scheme $\alpha$ that satisfies these two axioms.
	Let  $H=(S, \mathcal{P}(S))$ and $\bar{H}=(\bar{S}, \mathcal{\bar{P}}(\bar{S}))$ be two histories and let $P\in \mathcal{P}(S)$ and $\bar{P}\in \mathcal{\bar{P}}(\bar{S})$ be any two rounds with $|P|=|\bar{P}|=k$ for some $k$. Let $P=P_r$ and $\bar{P}= \bar{P}_{\bar{r}}$, i.e., $P$ is the $r^{th}$ round in $H$ and $\bar{P}$ is the $\bar{r}^{th}$ round in $\bar{H}$. As $\alpha$ satisfies round based rewards we have i) $\alpha(s, H)=\alpha(s, H|_{r})$ for all $s\in S$ and ii) $\alpha(\bar{s}, \bar{H})=\alpha(\bar{s}, \bar{H}|_{\bar{r}})$ for all $\bar{s}\in \bar{S}$. Let $H|_r=(\big\{s_1,\dots,s_k\big\}, \big\{\{s_1,\dots,s_k\}\big\})$ and $\bar{H}|_{\bar{r}}=(\big\{\bar{s}_1,\dots,\bar{s}_k\big\}, \big\{\{\bar{s}_1,\dots,\bar{s}_k\}\big\})$ denote these histories. In what follows we show for all $i\leq k$, 
	$$\alpha(s_i , H) = \alpha(\bar{s}_i , \bar{H})$$

	Consider a time-shift $\hat{H}^1$ of $H$ at $s_1$ (first share) with $\hat{S}^1=(S\setminus \{s_1\})\cup \{\hat{s}_1\})$ such that $\tau(\hat{s}_1)=\min \{ \tau(s_1), \tau(\bar{s}_1) \}$. Similarly consider a time-shift $\hat{\bar{H}}^1$ of $\bar{H}$ at $\bar{s}_1$ (first share) with $\hat{\bar{S}}^1=(\bar{S}\setminus \{\bar{s}_1\})\cup\{\hat{\bar{s}} \}$ such that $\tau(\hat{\bar{s}}_1)=\min \{ \tau(s_1), \tau(\bar{s}_1) \}$. Then by ordinality of $\alpha$, we have for all $s\in S\setminus \{s_1\}$ and for all $\bar{s}\in \bar{S}\setminus \{\bar{s}_1\}$:
	\begin{align*}
		\alpha(s, H|_r)= \alpha(\hat{s}, \hat{H}^1)\\
		\alpha(\bar{s}, \bar{H}|_{\bar{r}})= \alpha(\hat{\bar{s}}, \hat{\bar{H}}^1)
	\end{align*}
	
	Note that by Remark~\ref{remark1}, the total reward is fixed both in $P$ and in $\bar{P}$, therefore we conclude that for all $s\in S$ and for all $\bar{s}\in \bar{S}$:
	\begin{align*}
		\alpha(s, H|_r)= \alpha(\hat{s}, \hat{H}^1)\\
		\alpha(\bar{s}, \bar{H}|_{\bar{r}})= \alpha(\hat{\bar{s}}, \hat{\bar{H}}^1)
	\end{align*}
	
	Continuing iteratively and letting  $\hat{H}^i$ be the time-shift of $H$ at the $i^{th}$ share and $\hat{\bar{H}}^i$ as the time-shift of $\bar{H}$ at the $i^{th}$ share, the same argument above yields for all $i\leq k$:
	\begin{align*}
		\alpha(s, H|_r)= \alpha(\hat{s}, \hat{H}^1)= \alpha(\hat{s}, \hat{H}^2)=\dots=\alpha(\hat{s}, \hat{H}^k)\\
		\alpha(\bar{s}, \bar{H}|_{\bar{r}})=\alpha(\hat{\bar{s}}, \hat{\bar{H}}^1)=\alpha(\hat{\bar{s}}, \hat{\bar{H}}^2)=\dots= \alpha(\hat{\bar{s}}, \hat{\bar{H}}^k)
	\end{align*}	
	Note that by construction, the time signature of all shares at $\hat{H}^k$ are the same with those at $\hat{\bar{H}}^k$, i.e., $\hat{H}^k=\hat{\bar{H}}^k$. Therefore, $\alpha(\hat{s}, \hat{H}^k)=\alpha(\hat{\bar{s}}, \hat{\bar{H}}^k)$. This -together with the two equations above- implies $\alpha(s, H|_r)=\alpha(\bar{s}, \bar{H}|_{\bar{r}})$.
\end{proof}

\section{Proof of Theorem \ref{absolute_thm}}\label{absolute_thm_proof}
\textbf{Theorem 1.}
A reward sharing scheme $\alpha$ satisfies round based rewards, budget limit,  fixed total reward, ordinality and absolute redistribution  if and only if it is absolute fair in the sense of~\ref{afs}.
\begin{proof}
	\textbf{If part}. 
	\\ \textbf{Round based rewards}. Let $H=(S,\mathcal{P}(S))$ be a history, and let $r$ be any round. Note that, by restricting a round to the $r^{th}$ round the relative rank of a share $s$ in the round, as well as its time signature and value are the same at both $H$ and $H|_r$. Therefore,  $\alpha^{**}(s, H) = \alpha^{**}(s , H|_r)$.  
	
	\noindent
	\textbf{Budget limit}. Let $H=(S,\mathcal{P}(S))$ be any history. Take any round $r$. Let  $|P_r|=k$. As the relative fair scheme is based on $\rho(s)$, without loss of generality, we assume $P_r=\{1,2,\dots,k\}$ so that $\rho(s)=s$. Note that $	\sum\limits_{s=1}^{k} \alpha^{**}(s, H)=R\sum\limits_{s=1}^{k} \bigg(\varepsilon(s)- \sum\limits_{i=s+1}^{k}\frac{\varepsilon(i)}{i-1}\bigg)$. Therefore,
	
	\begin{align*}
		\frac{1}{R}\sum\limits_{s=1}^{k} \alpha^{**}(s, H)&=\sum\limits_{s=1}^{k} \bigg(\varepsilon(s)- \sum\limits_{i=s+1}^{k}\frac{\varepsilon(i)}{i-1}\bigg)\\
		&=	  \sum\limits_{s=1}^{k-1} \left(\varepsilon(s) - \sum\limits_{i=s+1}^{k}\frac{\varepsilon(i)}{i-1} \right)+ \varepsilon(k)  \\
		&= \sum\limits_{s=1}^{k-1}\varepsilon(s)  - \sum\limits_{s=1}^{k-1}\Big(\sum\limits_{i=s+1}^{k}\frac{\varepsilon(i)}{i-1}\Big) +\varepsilon(k) \\
		&= \sum\limits_{s=1}^{k-1}\varepsilon(s)  - \sum\limits_{s=1}^{k-1}
		\Big(\sum\limits_{i=s+1}^{k-1}\frac{\varepsilon(i)}{i-1}+\frac{\varepsilon(k)}{k-1}\Big) +\varepsilon(k) \allowdisplaybreaks\\
		&= \sum\limits_{s=1}^{k-1}\varepsilon(s)  - \sum\limits_{s=1}^{k-1}
		\sum\limits_{i=s+1}^{k-1}\frac{\varepsilon(i)}{i-1}-\sum\limits_{s=1}^{k-1}\frac{\varepsilon(k)}{k-1} +\varepsilon(k) \\
		&= \sum\limits_{s=1}^{k-1}\varepsilon(s)  - \sum\limits_{s=1}^{k-1}
		\sum\limits_{i=s+1}^{k-1}\frac{\varepsilon(i)}{i-1}-(k-1)\frac{\varepsilon(k)}{k-1} +\varepsilon(k) \\
		&= \sum\limits_{s=1}^{k-1}\varepsilon(s)  - \sum\limits_{s=1}^{k-1}
		\sum\limits_{i=s+1}^{k-1}\frac{\varepsilon(i)}{i-1} \\
		&\dots\dots\dots\\	
		&= \sum\limits_{s=1}^{2}\varepsilon(s)  - \sum\limits_{s=1}^{2}
		\sum\limits_{i=s+1}^{2}\frac{\varepsilon(i)}{i-1} \\
		&=(\varepsilon(1)+\varepsilon(2))-(\varepsilon(2))=\varepsilon(1)
	\end{align*}

	By definition of the $\varepsilon$  we have $0\leq\varepsilon(1)\leq 1$, hence $\sum\limits_{s=1}^{k} \alpha^{**}(s, H)\leq R$ the absolute redistribution scheme satisfies the budget limit.
	
	\noindent
	\textbf{Fixed total rewards}. It follows from a similar proof of the above.
	
	\textbf{Absolute redistribution}. Let $H=(S,\mathcal{P}(S))$ be any history and $P_r=\{s_1,s_2,\dots,s_k\}$ be any round.  Let $H'=(S', \mathcal{P}'(S'))$ be any extension  of $H$ at the $r^{th}$ round.  Take  any $s_a\in P_r$. Note that Note that $|P'(s_a)|=|P(s_a)|+1$.
	\begin{align*}
		\alpha^{**}(s_a , H) - \alpha^{**}(s_a , H')&=R\bigg(\varepsilon(\rho(s_a))- \sum\limits_{i=\rho(s_a)+1}^{|P(s_a)|}\frac{\varepsilon(i)}{i-1}\bigg)
		-
		R\bigg(\varepsilon(\rho(s_a))- \sum\limits_{i=\rho(s_a)+1}^{|P'(s_a)|}\frac{\varepsilon(i)}{i-1}\bigg)\\
		&=-R\bigg( \sum\limits_{i=\rho(s_a)+1}^{|P(s_a)|}\frac{\varepsilon(i)}{i-1}\bigg)
		+
		R\bigg( \sum\limits_{i=\rho(s_a)+1}^{|P(s_a)|+1}\frac{\varepsilon(i)}{i-1}\bigg)\\
		&=R\frac{\varepsilon(|P(s_a)|+1)}{|P(s_a)|}=R\frac{\varepsilon(k+1)}{k}
	\end{align*}

	\noindent\textbf{Only if part}. Take any reward sharing scheme $\alpha$ that satisfies the axioms. Take any history $H=(S,\mathcal{P}(S))$ and any $s\in S$. Let $s\in P_r$ for some $1\leq r\leq l=|\mathcal{P}(S)|$. As $\alpha$ satisfies round based rewards $\alpha(s, H)=\alpha(s, H|_r)$.  Similarly for $\alpha^{**}$, we have $\alpha^{**}(s, H)=\alpha^{**}(s, H|_r)$.  Hence, it suffices to prove $\alpha(s, H|_r)=\alpha^{**}(s, H|_r)$ at the restricted histories, i.e.,  histories with only a single round.

	Note that by Lemma~\ref{lemma2}, if the rounds in any two restricted histories are of the same size, then the shares are awarded based on their ranks. Hence, without loss of generality, we can denote these rounds $P_r=\{s_1,s_2,\ldots, s_n\}$ and the restricted histories as $H^n=H|_r$. Therefore, it suffices to prove $\alpha(s_i, H^n)=\alpha^{**}(s_i, H^n)$ for all $n$ and for all $i\leq n$. 
	%
	%
	%
	%
	%
	%
	%
	%
	%
	%

	%

	By absolute redistribution, for any $j>1$, we can let $\delta_{j}$ denote the absolute decrease in the rewards of all shares whilst moving from history $H^{j-1}$ to $H^{j}$, i.e., $\alpha(s,H^{j-1})-\alpha(s,H^j)$. In addition let $\delta_1=\alpha(s_1,H^1)$. As $\alpha$ is well-defined, for all $n\geq 1$, $\alpha(s_1,H^n)\geq 0$. Therefore, $\delta_1\geq \sum\limits_{i=2}^{\infty}\delta_i$ (otherwise at some history, $s_1$ would get a negative reward). Similarly, as $\alpha$ is well-defined, for all $n>1$, and for all $p\leq n$, $\alpha(s_p,H^n)\geq 0$. Note that by construction and Remark~\ref{remark1}, $\alpha(s_p,H^p)=(p-1)\delta_p$. Therefore, $(p-1)\delta_p\geq \sum\limits_{i={p+1}}^{\infty}\delta_i$ (otherwise at some history, $s_p$ would get a negative reward). Now, consider a function $\varepsilon$ such that $\varepsilon(1)=1$ and for all $j>1$, 
	
	\begin{equation}\label{epsilon_function}
		\varepsilon(j)=\frac{(j-1)\delta_{j}}{R}
	\end{equation}

	Note that we have the following two properties for $\varepsilon$: 

	\begin{enumerate}
		\item for all $j\geq 1$, $\varepsilon(j)\rightarrow [0,1]$. By definition $\delta_{j}$ denotes the absolute decrease in the rewards of all shares whilst moving from history $H^{j-1}$ to $H^{j}$. Therefore, the last share at $H^{j}$ gets $(j-1)\delta_{j}$ (as there are $j-1$ that each payouts $\delta_{j}$). As the scheme satisfies fixed total rewards and round based rewards then by Remark~\ref{remark1} the total reward at round   $H^{j}$ is $R$. Therefore, it must be the case that $(j-1)\delta_{j}\leq R$ so $\delta_{j}\leq \frac{R}{j-1}$. Therefore, $\varepsilon(j)=\frac{(j-1)\delta_{j}}{R}\leq \frac{j-1}{R}\times \frac{R}{j-1}$. As $\delta_{j}\geq 0$ then $\varepsilon(j)\geq 0$. All in all,  $\varepsilon(j)\rightarrow [0,1]$ for all $j\geq 1$.
		\item for all $j\geq 1$, $\varepsilon(j)\geq \sum\limits_{i=j+1}^{\infty}\frac{\varepsilon(i)}{i-1}$.
		
		For $j=1$ we have $\varepsilon(j)=1$ we have $1\geq \sum\limits_{i=2}^{\infty}\frac{\varepsilon(i)}{i-1}$. Replacing from Equation~\ref{epsilon_function} we have $1\geq \sum\limits_{i=2}^{\infty}\frac{1}{i-1}\frac{(i-1)\delta_{i}}{R}$ which implies $R\geq \sum\limits_{i=2}^{\infty}\delta_{i}$.
		
		For $\varepsilon(j)$ we have $\varepsilon(j)=\frac{(j-1)\delta_{j}}{R} \geq \sum\limits_{i=j+1}^{\infty}\frac{\varepsilon(i)}{i-1}$. Replacing from Equation~\ref{epsilon_function} we have $(j-1)\delta_j\geq \sum\limits_{i=j+1}^{\infty}\delta_{i}$.
	\end{enumerate}

	In what follows we shall show that $\alpha(s,H^n)=\alpha^{**}(s,H^n)$ for the aforementioned $\varepsilon$ function. We do it first for a single share rounds and then for multi-share rounds.

	\textbf{Single-share round:} Let $H^1$ be any history with a single round with a single share $s$. By Remark 1, $\alpha(s, H^1)=R$. Setting $\varepsilon(1)=1$ we have:  
	\begin{equation}\label{singe_share_round_absolute}
		\alpha(s,H^1)=\alpha^{**}(s,H^1)=R\varepsilon(1)
	\end{equation}
	
	\textbf{Multi-share round:} Let $H^n$ be any history with a single round with multiple shares $s_1, \ldots, s_n$. In what follows, we show  that $\alpha(s,H^n)=\alpha^{**}(s,H^n)$ any $s\in P_r$, by induction on $n$, i.e., the size of the round $P_r$.

	\textbf{Induction Basis:} Let $n=2$. Let $H^2=(\{s_1,s_2\},\big\{\{s_1,s_2\}\big\} )$. We will show that  $\alpha(s_1, H^2)=\alpha^{**}(s_1, H^2)$ and $\alpha(s_2, H^2)=\alpha^{**}(s_2, H^2)$. By Remark \ref{remark1} for two histories $H^1$ and $H^2$, and by Equation~\ref{singe_share_round_absolute}, we have $\alpha(s_1, H^2)+\alpha(s_2, H^2)=R$. So, $\alpha(s_1, H^2)=R-\delta_2$ and $\alpha(s_2, H^2)=\delta_2$.. Setting $\varepsilon(2)=\frac{\delta_2}{R}$, yields $\alpha(s_1, H^2)=R\varepsilon(1)-R\varepsilon(2)=\alpha^{**}(s_1, H^2)$ and $\alpha(s_2, H^2)=R\varepsilon(2)=\alpha^{**}(s_2, H^2)$.

	\textbf{Induction Hypothesis:} Let $n=k$ with $k>1$. Suppose we have $\alpha(s_i, H^k)=\alpha^{**}(s_i, H^k)$ for all $i\leq k$.
	
	To prove for $n=k+1$, consider any $H^{k+1}=(\big\{s_1,\dots,s_k,s_{k+1}\big\}, \big\{\{s_1,\dots,s_k,s_{k+1}\}\big\})$. We will show that  $\alpha(s_i, H^{k+1})=\alpha^{**}(s_i, H^{k+1})$ for all  $i\leq k+1$. Let $H^k=(\big\{s_1,\dots,s_k\big\}, \big\{\{s_1,\dots,s_k\}\big\})$.

	By construction, $\alpha(s_i , H^{k+1})=\alpha(s_i , H^{k})-\delta_{k+1}$ for all $i\leq k$. By induction hypothesis 
	$\alpha(s_i, H^k)=\alpha^{**}(s_i, H^k)$ for all $i\leq k$ which implies $\alpha(s_i, H^{k+1})=\alpha^{**}(s_i, H^k)-\delta_{k+1}$. Therefore,
	\begin{equation}\label{equation_1_to_k_absolute_111}
		\alpha(s_i, H^{k+1})=R\bigg(\varepsilon(i)- \sum\limits_{j=i+1}^{k}\frac{\varepsilon(j)}{j-1}\bigg) - \delta_{k+1}\text{ for all } i\leq k
	\end{equation}
	
	As $\delta_{k+1}=\frac{\varepsilon(k+1)R}{k}$, the above equation is simplified as:
	
	\begin{equation}\label{equation_1_to_k_absolute}
		\alpha(s_i, H^{k+1})=R\bigg(\varepsilon(i)- \sum\limits_{j=i+1}^{k+1}\frac{\varepsilon(j)}{j-1}\bigg)\text{ for all } i\leq k
	\end{equation}

	\noindent Note that $\alpha(s_i , H^{k+1})=\alpha(s_i , H^{k})-\delta_{k+1}$, and by Lemma~\ref{lemma1} (on $H^k$ and on $H^1$), we have 
	\begin{equation}\label{absolute_equationsomething}
		\sum\limits_{i=1}^{k}\alpha(s_i , H^{k+1})=\sum\limits_{i=1}^{k}\alpha(s_i , H^{k})-k\delta_{k+1}=R-k\delta_{k+1}
	\end{equation}
	
	\noindent Note also that 
	$ \sum\limits_{i=1}^{k+1}\alpha(s_i , H^{k+1})=\sum\limits_{i=1}^{k}\alpha(s_i , H^{k+1})+\alpha(s_{k+1},H^{k+1})$. Plugging Equation~\ref{absolute_equationsomething} into this yields $\sum\limits_{i=1}^{k+1}\alpha(s_i , H^{k+1})=R-k\delta_{k+1}+\alpha(s_{k+1} , H^{k+1})$. By Remark~\ref{remark1}, $\sum\limits_{i=1}^{k+1}\alpha(s_i , H^{k+1})=R$, therefore we have:
	\begin{equation}\label{absolute_equation_k+1}
		\alpha(s_{k+1} , H^{k+1})= k\delta_{k+1}=R\varepsilon(k+1).
	\end{equation}
	
	Equations~\ref{equation_1_to_k_absolute} and~\ref{absolute_equation_k+1} together prove $\alpha(s_i, H^{k+1})=\alpha^*(s_i, H^{k+1})$ for $i\in \{1, \ldots, k\}$ and for $i=k+1$, respectively.
\end{proof}

\section{Proof of Theorem \ref{relative_thm}}\label{relative_thm_proof}

\textbf{Theorem 2.}
A reward scheme $\alpha$ satisfies round based rewards, budget limit,  fixed total reward, ordinality and relative redistribution  if and only if it is relative fair    in the sense  of \ref{rfs}.
\begin{proof}
	\textbf{If part}.\\
	\textbf{Round based rewards}.To show that the relative fair scheme satisfies $\alpha^*(s, H) = \alpha^*(s , H|_r)$, let $H=(S,\mathcal{P}(S))$ be a history, and let $r$ be any round. Note that, by restricting a history to the $r^{th}$ round, the round and the shares remain intact, hence the relative rank of a share $s$ is the same at both $H$ and $H|_r$. Therefore, the relative fair scheme satisfies round based rewards.  \\
	\textbf{Budget limit}. Let $H=(S,\mathcal{P}(S))$ be any history. Take any round $r$. Let  $|P_r|=k$. As the relative fair scheme is based on $\rho(s)$, without loss of generality, we assume $P_r=\{1,2,\dots,k\}$ so that $\rho(s)=s$. Therefore to  show $\sum\limits_{s\in P_r} \alpha^{**}(s , H) \leq R$, we have
	\begin{align*}
		&\sum\limits_{s\in P_r} \alpha(s, H) = 	\sum\limits_{s=1}^{k}\left(R\times \varepsilon(\rho(s)) \times\prod\limits_{i=\rho(s)+1}^{k}(1-\varepsilon(i))\right)\\
		&=	R \left( \varepsilon(1) \times\prod\limits_{i=2}^{k}(1-\varepsilon(i))\right) + R
		\sum\limits_{s=2}^{k}\left( \varepsilon(\rho(s)) \times\prod\limits_{i=\rho(s)+1}^{k}(1-\varepsilon(i))\right)\\
		&=	R \Big( (1-\varepsilon(2))(\prod\limits_{i=3}^{k}1-\varepsilon(i))\Big) + R
		\sum\limits_{s=2}^{k}\left( \varepsilon(s) \times\prod\limits_{i=s+1}^{k}(1-\varepsilon(i))\right)\\
		&=	R
		\Big(
		\prod\limits_{i=3}^{k} \Big(1-\varepsilon(i)\Big) 
		-
		\varepsilon(2) \prod\limits_{i=3}^{k} \Big(1-\varepsilon(i)\Big) 
		\Big) + R
		\sum\limits_{s=2}^{k}\left( \varepsilon(s) \times\prod\limits_{i=s+1}^{k}(1-\varepsilon(i))\right)\\
		&=	R
		\prod\limits_{i=3}^{k} \Big(1-\varepsilon(i)\Big) 
		-
		R \varepsilon(2) \prod\limits_{i=3}^{k} \Big(1-\varepsilon(i)\Big) 
		+ R \varepsilon(2) \prod\limits_{i=3}^{k}(1-\varepsilon(i))
		+
		R
		\sum\limits_{s=3}^{k}\left( \varepsilon(s) \times\prod\limits_{i=s+1}^{k}(1-\varepsilon(i))\right)
	\end{align*}
	\begin{align*}
		&=	R
		\prod\limits_{i=3}^{k} \Big(1-\varepsilon(i)\Big)  + R
		\sum\limits_{s=3}^{k}\left( \varepsilon(s) \times\prod\limits_{i=s+1}^{k}(1-\varepsilon(i))\right)\\
		&=\cdots\\
		&=	R\times \Big((1-\varepsilon(k)) \Big) + R
		\sum\limits_{s=k}^{k}\left( \varepsilon(s) \times\prod\limits_{i=s+1}^{k}(1-\varepsilon(i))\right)\\
		&=	R\times \Big((1-\varepsilon(k)) \Big) + R\varepsilon(k)=R
	\end{align*}
	
	\noindent \textbf{Fixed total rewards}. It follows from a similar proof of the above. 
	
	\noindent\textbf{Relative redistribution}. To show Relative fairness let $H=(S,\mathcal{P}(S))$ be any history and $P_r=\{s_1,s_2,\dots,s_k\}$ be any round. Let $H'=(S', \mathcal{P}'(S'))$ be any extension  of $H$ at the $r^{th}$ round. Then for any $s_a,s_b\in P_r$ $\alpha^*(s_a , H)\neq 0$ and $\alpha^*(s_b , H)\neq 0$ we have:
	
	\begin{align*}
		\frac{ \alpha^*(s_a , H')}{\alpha^*(s_a , H)}&= \frac{ R\times \varepsilon(\rho(s_a)) \times\prod\limits_{i=\rho(s_a)+1}^{|P(s_a)|+1}\big(1-\varepsilon(i)\big)}{R\times \varepsilon(\rho(s_a)) \times\prod\limits_{i=\rho(s_a)+1}^{|P(s_a)|}\big(1-\varepsilon(i)\big)}
		=1-\varepsilon\big(|P(s_a)|+1\big) =1-\varepsilon\big(k+1\big)
	\end{align*}
	\begin{align*}
		\frac{ \alpha^*(s_b , H')}{\alpha^*(s_b , H)}&= \frac{ R\times \varepsilon(\rho(s_b)) \times\prod\limits_{i=\rho(s_b)+1}^{|P(s_b)|+1}\big(1-\varepsilon(i)\big)}{R\times \varepsilon(\rho(s_b)) \times\prod\limits_{i=\rho(s_b)+1}^{|P(s_b)|}\big(1-\varepsilon(i)\big)}
		=1-\varepsilon\big(|P(s_b)|+1\big) =1-\varepsilon\big(k+1\big)
	\end{align*}
	The above two equations shows that $\frac{ \alpha^*(s_a , H')}{\alpha^*(s_a , H)}=\frac{ \alpha^*(s_b , H')}{\alpha^*(s_b , H)}$.

	\textbf{Only if part}. Take any reward sharing scheme $\alpha$ that satisfies the axioms. Take any history $H=(S,\mathcal{P}(S))$ and any $s\in S$. Let $s\in P_r$ for some $1\leq r\leq l=|\mathcal{P}(S)|$. As $\alpha$ satisfies round based rewards $\alpha(s, H)=\alpha(s, H|_r)$.  Similarly for $\alpha^{**}$, we have $\alpha^{**}(s, H)=\alpha^{**}(s, H|_r)$.  Hence, it suffices to prove $\alpha(s, H|_r)=\alpha^{**}(s, H|_r)$ at the restricted histories, i.e.,  histories with only a single round.

	Note that by Lemma~\ref{lemma2}, if the rounds in any two restricted histories are of the same size, then the shares are awarded based on their ranks. Hence, without loss of generality, we can denote these rounds $P_r=\{s_1,s_2,\ldots, s_n\}$ and the restricted histories as $H^n=H|_r$. Therefore, it suffices to prove $\alpha(s_i, H^n)=\alpha^{**}(s_i, H^n)$ for all $n$ and for all $i\leq n$. 
	
	\textbf{Single-share round:} Let $H^1$ be any history with a single round with a single share $s$. As $\alpha$ satisfies fixed total rewards and round based rewards then by Remark \ref{remark1}, $\alpha(s, H^1)=R$. Setting $\varepsilon(1)=1$ we have: 
	\begin{equation}\label{singe_share_round}
		\alpha(s,H^1)=\alpha^*(s,H^1)= R. 
	\end{equation}
	
	\textbf{Multi-share round:} Let $H^n$ be any history with a single round with multiple shares $s_1, \ldots, s_n$. In what follows, we show  that $\alpha(s,H^n)=\alpha^*(s,H^n)$ any $s\in P_r$, by induction on $n$, i.e., the size of the round $P_r$.

	\textbf{Induction Basis:} Let $n=2$. Let $H^2=(\{s_1,s_2\},\big\{\{s_1,s_2\}\big\} )$. We will show that  $\alpha(s_1, H^2)=\alpha^*(s_1, H^2)$ and $\alpha(s_2, H^2)=\alpha^*(s_2, H^2)$. By Lemma \ref{lemma1} for two histories $H^1$ and $H^2$, and by Equation~\ref{singe_share_round}, we have $\alpha(s_1, H^2)+\alpha(s_2, H^2)=R$. So, $\alpha(s_1, H^2)=R(1-\delta_2)$ and $\alpha(s_2, H^2)=R\delta_2$ for some $\delta_2 \in [0,1]$. Setting $\varepsilon(2)=\delta_2$, yields $\alpha(s_1, H^2)=R(1-\varepsilon(2))$ and $\alpha(s_2, H^2)=R\varepsilon(2)$.
	Therefore, $\alpha(s_1, H^2)=R\varepsilon(1)(1-\varepsilon(2))=\alpha^*(s_1, H^2)$ and $\alpha(s_2, H^2)=R\varepsilon(2)=\alpha^*(s_2, H^2)$. Note that, $\varepsilon(1)\in [0,1]$, and as $\delta_2\in [0,1]$ then $\varepsilon(2)\in [0,1]$.
	
	\textbf{Induction Hypothesis:} Let $n=k$ with $k>1$. Suppose we have $\alpha(s_i, H^k)=\alpha^*(s_i, H^k)$ for all $i\leq k$. 
	
	To prove for $n=k+1$, consider any $H^{k+1}=(\big\{s_1,\dots,s_k,s_{k+1}\big\}, \big\{\{s_1,\dots,s_k,s_{k+1}\}\big\})$. We will show that  $\alpha(s_i, H^{k+1})=\alpha^*(s_i, H^{k+1})$ for all  $i\leq k+1$. Let $H^k=(\big\{s_1,\dots,s_k\big\}, \big\{\{s_1,\dots,s_k\}\big\})$.

	By relative redistribution for all $i,j \in \{1,\dots, k\}$ we have
	\begin{equation}\label{relative_eq1}
		\frac{ \alpha(s_i , H^{k+1})}{\alpha(s_i , H^{k})}=\frac{ \alpha(s_j , H^{k+1})}{\alpha(s_j , H^{k})}
	\end{equation}
	Let us denote this ratio above by $1-\delta_{k+1}$. Therefore, $\alpha(s_i , H^{k+1})=(1-\delta_{k+1})\alpha(s_i , H^{k})$ for all $i\leq k$. By induction hypothesis 
	$\alpha(s_i, H^k)=\alpha^*(s_i, H^k)$ for all $i\leq k$ which implies $\alpha(s_i, H^{k+1})=(1-\delta_{k+1})\alpha^*(s_i, H^k)$. Therefore,
	
	\begin{equation}\label{equation_1_to_k}
		\alpha(s_i, H^{k+1})=(1-\delta_{k+1})R\varepsilon(i)\prod\limits_{j=i+1}^{k}(1-\varepsilon(j))\text{ for all } i\leq k
	\end{equation}

	\noindent By Equation~\ref{relative_eq1}, $\alpha(s_i , H^{k+1})=(1-\delta_{k+1})\alpha(s_i , H^{k})$, and by Lemma~\ref{lemma1} (on $H^k$ and on $H^1$), we have 
	\begin{equation}\label{equationsomething}
		\sum\limits_{i=1}^{k}\alpha(s_i , H^{k+1})=(1-\delta_{k+1})\sum\limits_{i=1}^{k}\alpha(s_i , H^{k})=(1-\delta_{k+1})R
	\end{equation}

	\noindent Note that 
	$ \sum\limits_{i=1}^{k+1}\alpha(s_i , H^{k+1})=\sum\limits_{i=1}^{k}\alpha(s_i , H^{k+1})+\alpha(s_{k+1},H^{k+1})$. Plugging Equation~\ref{equationsomething} into this, yields $\sum\limits_{i=1}^{k+1}\alpha(s_i , H^{k+1})=(1-\delta_{k+1})R+\alpha(s_{k+1} , H^{k+1})$. By Remark~\ref{remark1}, $\sum\limits_{i=1}^{k+1}\alpha(s_i , H^{k+1})=R$, therefore we have:
	\begin{equation}\label{equation_k+1}
		\alpha(s_{k+1} , H^{k+1})= \delta_{k+1}R.
	\end{equation}
	
	Setting $\varepsilon(k+1)=\delta_{k+1}$ in Equations~\ref{equation_1_to_k} and~\ref{equation_k+1}, proves $\alpha(s_i, H^{k+1})=\alpha^*(s_i, H^{k+1})$ for all $i\in \{1, \ldots, k\}$ and for $i=k+1$, respectively. 
	
	Note that,	 $\varepsilon(k+1)\in [0,1]$ for all $k>0$. Otherwise, if  $\varepsilon(k+1)<0$ then $\delta_{k+1}<0$. However, by Equation~\ref{equation_k+1}, this implies  $\alpha(s_{k+1} , H^{k+1})<0$ which contradicts Definition~\ref{def_reward_rule}. 
	If  $\varepsilon(k+1)>1$ then $\delta_{k+1}>1$ which implies $1-\delta_{k+1}<0$. However, by Equation~\ref{relative_eq1}, $\alpha(s_i , H^{k+1})=(1-\delta_{k+1})\alpha(s_i , H^{k})$ for all $i\leq k$, this implies  $\alpha(s_i , H^{k+1})<0$ for all $i\in \{1,\dots,k\}$ which contradicts Definition~\ref{def_reward_rule}. 
\end{proof}

\section{Proof of Theorem \ref{thm3}}\label{thm3_proof}
\textbf{Theorem 3.} 	A reward sharing scheme satisfies round based rewards, budget limit,  fixed total reward, ordinality, absolute redistribution, and relative redistribution if and only if it is $k$-pseudo proportional in the sense of ~\ref{pseudo_prop_def}. 
\begin{proof}
	Take any reward sharing scheme $\alpha$ that satisfies the axioms. Take any history $H=(S,\mathcal{P}(S))$ and any $s\in S$. Let $s\in P_r$ for some $1\leq r\leq l=|\mathcal{P}(S)|$. As $\alpha$ satisfies round based rewards $\alpha(s, H)=\alpha(s, H|_r)$.  Similarly for $\alpha^{k,\delta}$, we have $\alpha^{k,\delta}(s, H)=\alpha^{k,\delta}(s, H|_r)$.  Hence, it suffices to prove $\alpha(s, H|_r)=\alpha^{k,\delta}(s, H|_r)$ at the restricted histories, i.e.,  histories with only a single round.

	Note that by Lemma~\ref{lemma2}, if the rounds in any two restricted histories are of the same size, then the shares are awarded based on their ranks. Hence, without loss of generality, we can denote these rounds $P_r=\{s_1,s_2,\ldots, s_n\}$ and the restricted histories as $H^n=H|_r$. Therefore, it suffices to prove that there exist some $k$ and some $\delta$ such that $\alpha(s_i, H^n)=\alpha^{k,\delta}(s_i, H^n)$ for all $n$ and for all $i\leq n$. First of all, note that for all single share histories, $H^1$, Remark~\ref{remark1} implies that $\alpha$ coincides with $\alpha^{k,\delta}$ regardless of the choice of $k$ and $\delta$. So let $n\geq 2$.

	In what follows, we find -by iterating on $n$- that there exist $k$ and $\delta$ such that $\alpha(s_i, H^n)=\alpha^{k,\delta}(s_i,H^n)$ for all $n\geq 2$ and for all $i\leq n$. At each step, we ask if $\alpha$ distributes the awards proportionally, or not. The proof structure is as follows: 
	
	\begin{enumerate}
		\item At step $h$, if $\alpha$ distributes the awards proportionally, then we move to step $h+1$.
		\item At step $h$, if $\alpha$ distributes the awards disproportionately (while it was proportional at step $h-1$), then we set $k=h$, and $\delta=\alpha(s_h,H^h)$, i.e., the award of the last share in the round. Thereafter we show that the scheme $\alpha$ coincides with $\alpha^{k,\delta}$ for all possible rounds and all shares in these rounds. 
	\end{enumerate}
	Note a round of size $n=2$ is a special case, since a round of size $1$ cannot be decided to be proportional or disproportionate. Therefore we first treat such histories.	
	
	
	%
	\noindent \textbf{STEP 2}: Let $n=2$, and $H^2=(\{s_1,s_2\},\big\{\{s_1,s_2\}\big\} )$. By Remark \ref{remark1}, we have $\alpha(s_1, H^2)+\alpha(s_2, H^2)=R$. So $\Big(\alpha(s_1,H^2),\alpha(s_2,H^2)\Big)=(R-\gamma_2,\gamma_2)$ for some $\gamma_2\in [0,R]$. Note also that $\alpha^{k,\delta}(s_1, H^2)+\alpha^{k,\delta}(s_2, H^2)=R$ for any $k$ and for any $\delta$. Now, there are two cases, either the awards are proportional, or not.
	%
	%
	
	\textbf{Case 1}: If $R-\gamma_2=\gamma_2$, then $k>2$, therefore continue to next step, (Step 3, i.e., $n=3$).
	
	\textbf{Case 2}: If $R-\gamma_2\neq\gamma_2$, then we set $k =2$ and $\delta=\gamma_2$. Note that $\alpha^{2,\gamma_2}(s_1, H^2)= \frac{R-\gamma_2}{2-1}$ and $\alpha^{2,\gamma_2}(s_2, H^2)= \gamma_2$. 
	Hence, for $n= 2$, we have $\alpha(s_i, H^n)=\alpha^{2,\gamma_2}(s_i,H^n)$ for all $i\leq n$. 
	Next we also show, for any $n>2$, we have $\alpha(s_i, H^n)=\alpha^{2,\gamma_2}(s_i,H^n)$  for all $i\leq n$. 
	
	\textbf{Case 2a.} If $\gamma_2=R$, then $\Big(\alpha(s_1,H^2),\alpha(s_2,H^2)\Big)=(0,R)$. Consider any extension $H^3$ of $H^2$. As $\alpha$ does not assign negative awards, Remark \ref{remark1} and absolute redistribution implies $$\Big(\alpha(s_1,H^3),\alpha(s_2,H^3),\alpha(s_3,H^{3})\Big)=(0,R, 0).$$ Similar argument can be extended to $H^4$ and further, e.g., $(0,R,0,0,\ldots,0)$ which shows $\alpha(s_i, H^n)=\alpha^{2,\gamma_2}(s_i,H^n)$.
	
	\textbf{Case 2b.} If $\gamma_2=0$, then $\Big(\alpha(s_1,H^2),\alpha(s_2,H^2)\Big)=(R,0)$. Consider any extension $H^3$ of $H^2$. As $\alpha$ does not assign negative awards, Remark \ref{remark1} and absolute redistribution implies $$\Big(\alpha(s_1,H^3),\alpha(s_2,H^3),\alpha(s_3,H^{3})\Big)=(R,0, 0).$$ Similar argument can be extended to $H^4$ and further, e.g., $(R,0,0,0,\ldots,0)$ which shows $\alpha(s_i, H^n)=\alpha^{2,\gamma_2}(s_i,H^n)$.
	
	\textbf{Case 2c.} If $\gamma_2\in (0,R)$, then $\Big(\alpha(s_1,H^2),\alpha(s_2,H^2)\Big)=(R-\gamma_2,\gamma_2)$. Consider any extension $H^3$ of $H^2$. By absolute redistribution we have $\alpha(s_1,H^{2})-\alpha(s_1,H^{3})=\alpha(s_2,H^{2})-\alpha(s_2,H^{3})$. By relative redistribution we  have
	$$\frac{\alpha(s_1,H^{3})}{\alpha(s_1,H^{2})}=\frac{\alpha(s_2,H^{3})}{\alpha(s_2,H^{2})}=\theta$$
	Combining these equations, we have $\alpha(s_1,H^{2})-\alpha(s_1,H^2)\theta=\alpha(s_2,H^2)-\alpha(s_2,H^2)\theta$ which implies $\alpha(s_1,H^2)(1-\theta)=\alpha(s_2,H^2)(1-\theta)$. As $R-\gamma_2\neq\gamma_2$, the previous equation only holds if $\theta=1$. This results in $\alpha(s_1,H^{3})=\alpha(s_1,H^{2})$ and $\alpha(s_2,H^{3})=\alpha(s_2,H^{2})$. Finally,  as $\alpha$ does not assign negative awards,  Remark~\ref{remark1} implies
	$$\Big(\alpha(s_1,H^3),\alpha(s_2,H^3),\alpha(s_3,H^{3})\Big)=(R-\gamma_2,\gamma_2, 0).$$
	Similar argument can be extended to $H^4$ and further, e.g., $(R-\gamma_2,\gamma_2,0,0,\ldots,0)$ which shows $\alpha(s_i, H^n)=\alpha^{2,\gamma_2}(s_i,H^n)$.

	\noindent	\textbf{STEP h:} Let $n=h$, and $H^h=(\{s_1,\dots,s_h\},\big\{\{s_1,\dots,s_h\}\big\} )$.  Reaching to step $h$ implies the awards to shares at $H^{h-1}=(\{s_1,\dots,s_{h-1}\},\big\{\{s_1,\dots,s_{h-1}\}\big\} )$ were distributed proportionally and hence all are equal. 	By relative redistribution for all $i,j\leq h-1$ we have, 
	$\frac{\alpha(s_i,H^{h})}{\alpha(s_i,H^{h-1})}=\frac{\alpha(s_j,H^{h})}{\alpha(s_j,H^{h-1})}$. 
	This implies  $\alpha(s_i,H^{h})=\alpha(s_j,H^{h})$ for all $i,j\leq h-1$. By Remark \ref{remark1}, $\sum_{i=1}^{h}\alpha(s_i,H^h)=R$, which implies $(h-1)\alpha(s_i,H^h)+\alpha(s_h,H^h)=R$ for any $i\leq h-1$. Therefore, $\alpha(s_i,H^h)=\frac{R-\alpha(s_h,H^h)}{h-1}$. Let us denote $\alpha(s_h,H^h)=\gamma_h$ for some $\gamma_h\in [0,R]$, so  $\Big(\alpha(s_1,H^h),\alpha(s_2,H^h),\dots, \alpha(s_{h-1},H^{h}), \alpha(s_{h},H^{h})\Big)=(\frac{R-\gamma_h}{h-1}, \frac{R-\gamma_h}{h-1},\dots, \frac{R-\gamma_h}{h-1}, \gamma_h)$. Now, there are two cases, either the awards are proportional, or not.
	
	\textbf{Case 1}: If $\frac{R-\gamma_h}{h-1}=\gamma_h$, then $k>h$, therefore continue to next step, (Step $h+1$, i.e., $n=h+1$).
	
	\textbf{Case 2}: If $\frac{R-\gamma_h}{h-1}\neq \gamma_h$, then we set $k=h$ and $\delta=\gamma_h$. Note that $\alpha^{h,\gamma_h}(s_i, H^h)= \frac{R-\gamma_h}{h-1}$ for all $i< h$ and $\alpha^{h,\gamma_h}(s_h, H^h)= \gamma_h$. 
	Hence, for $n= h$, we have $\alpha(s_i, H^n)=\alpha^{h,\gamma_h}(s_i,H^n)$ for all $i\leq n$. 
	Next we also show, for any $n>h$, we have $\alpha(s_i, H^n)=\alpha^{h,\gamma_h}(s_i,H^n)$  for all $i\leq n$. 
	
	\textbf{Case 2a.} If $\gamma_h=R$, then $\Big(\alpha(s_1,H^h),\alpha(s_2,H^h), \dots, \alpha(s_{h-1},H^h), \alpha(s_h,H^h)\Big)=(0,\dots,0,R)$. Consider any extension $H^{h+1}$  of $H^h$.  As $\alpha$ does not assign negative awards, Remark \ref{remark1} and absolute redistribution implies 
	$$\Big(\alpha(s_1,H^{h+1}), \dots, \alpha(s_{h},H^{h+1}), \alpha(s_{h+1},H^{h+1})\Big)=(0,\dots,0,R,0)$$
	Similar argument can be extended to $H^{h+2}$ and further, e.g., $(0,\ldots,0,R,0,\dots,0)$ which shows $\alpha(s_i, H^n)=\alpha^{h,\gamma_h}(s_i,H^n)$.
	
	\textbf{Case 2b.} If $\gamma_h=0$, then $\Big(\alpha(s_1,H^h),\alpha(s_2,H^h), \dots, \alpha(s_{h-1},H^h), \alpha(s_h,H^h)\Big)=(\frac{R}{h-1},\dots,\frac{R}{h-1},0)$. Consider any extension $H^{h+1}$  of $H^h$.  As $\alpha$ does not assign negative awards, Remark \ref{remark1} and absolute redistribution implies 
	$$\Big(\alpha(s_1,H^{h+1}), \dots, \alpha(s_{h},H^{h+1}), \alpha(s_{h+1},H^{h+1})\Big)=(\frac{R}{h-1},\dots,\frac{R}{h-1},0,0)$$
	Similar argument can be extended to $H^{h+2}$ and further, e.g., $(\frac{R}{h-1},,\ldots,\frac{R}{h-1},0,\dots,0)$ which shows $\alpha(s_i, H^n)=\alpha^{h,\gamma_h}(s_i,H^n)$.
	
	\textbf{Case 2c.} If $\gamma_h\in (0,R)$, then $\Big(\alpha(s_1,H^h),\alpha(s_2,H^h), \dots, \alpha(s_{h-1},H^h), \alpha(s_h,H^h)\Big)=(\frac{R-\gamma_h}{h-1}, \allowbreak \frac{R-\gamma_h}{h-1}, \dots, \frac{R-\gamma_h}{h-1},\gamma_h)$. Consider any extension $H^{h+1}$  of $H^h$.
	By absolute redistribution  for all $i,j\leq h$ we have $\alpha(s_i,H^{h})-\alpha(s_i,H^{h+1})=\alpha(s_j,H^{h})-\alpha(s_j,H^{h+1})$.
	By relative redistribution  for all $i,j\leq h$ we  have
	$$\frac{\alpha(s_i,H^{h+1})}{\alpha(s_i,H^{h})}=\frac{\alpha(s_j,H^{h+1})}{\alpha(s_j,H^{h})}=\theta$$
	Combining these equations, we have $\alpha(s_i,H^{h})-\alpha(s_i,H^h)\theta=\alpha(s_j,H^h)-\alpha(s_j,H^h)\theta$ which implies $\alpha(s_i,H^h)(1-\theta)=\alpha(s_j,H^h)(1-\theta)$.
	As   $\alpha(s_i,H^h)\neq\alpha(s_j,H^h)$ for all $i,j\leq h$, the previous equation only holds if $\theta=1$. This results in $\alpha(s_i,H^{h+1})=\alpha(s_i,H^{h})$  for all $i\leq h$. Finally,  as $\alpha$ does not assign negative awards,  Remark~\ref{remark1} implies
	$$\Big(\alpha(s_1,H^{h+1}),\dots,\alpha(s_{h-1},H^{h+1}), \alpha(s_{h},H^{h+1}), \alpha(s_{h+1},H^{h+1})\Big)=(\frac{R-\gamma_h}{h-1},\dots,\frac{R-\gamma_h}{h-1}, \gamma_h,0).$$
	Similar argument can be extended to $H^4$ and further, e.g., $(\frac{R-\gamma_h}{h-1},\dots,\frac{R-\gamma_h}{h-1}, \gamma_h,0,\dots,0)$ which shows $\alpha(s_i, H^n)=\alpha^{2,\gamma_2}(s_i,H^n)$.
	
	Note that in case $\alpha$ never distributes the awards ``disproportionately'', then this implies that $h$ goes to infinity and therefore, we set $k=\infty$ and it is clear to see that $\alpha=\alpha^{\infty,\delta}$ for any $\delta$, i.e., $\alpha$ is the proportional scheme, which is an element of $k$-pseudo proportional class. All in all, this completes the proof.
\end{proof}

\section{Known reward sharing schemes}\label{formal_rules}
In this section we focus on two of the most popular, and widely applied, reward sharing schemes and examine whether they satisfy the axioms proposed in Section \ref{axioms_section}.  We also comment on several potentially interesting schemes suggested in \cite{rosenfeld2011analysis} and \cite{schrijvers2016incentive} respectively.

$\bullet$  \textbf{Pay Per Share (PPS):}
The strict egalitarian Pay-Per-Share scheme fails to ensure an economically viable mining pool (as noted in \cite{rosenfeld2011analysis}) but has been applied, for instance by F2Pool\footnote{\url{https://www.f2pool.com/}} and Poolin\footnote{\url{https://www.poolin.com/}}, probably due to its immediate simplicity and transparency.  
In pay per share, every submitted share receives a fixed  reward regardless of when it is submitted, and the round it is submitted in. Formally, $\alpha(s, H)= c$ for some constant $c$. The payments to the shares are usually adjusted by the network difficulty and the length of a round.
Trivially, PPS fails fixed total reward and budget limit, but satisfies relative redistribution, absolute redistribution, ordinality and round based rewards.  

$\bullet$ \textbf{Pay-Per-Last-N-Shares:} The pay-per-last-N-shares (PPLNS for short) has also been popular and is applied, for instance in GHash.IO\footnote{\url{https://ghash.io/}} and P2Pool\footnote{\url{http://p2pool.in/}}. It has been claimed that this scheme (under certain conditions) prevents miners from delay reporting their shares (see \cite{schrijvers2016incentive} and \cite{lazos2021rpplns} for a modified version).
In PPLNS, the pool manager gets a fixed fee, say $f$, and the net-reward, $R=(1-f)B$, is distributed (proportionally) among the $N$ last shares (including the full the solution), regardless of the round boundaries. Therefore, the reward of a share at the time it is submitted depends on the number of full solutions among the next $N-1$ shares. That is, if no full solution is found among the next $N-1$ shares, the share receives no reward, if one full solution is found it is rewarded once with $\frac{1}{N}R$, if two  full solutions are found it is rewarded twice with $\frac{1}{N}R$, and so forth.

To formally define the PPLNS, consider a history $H=(S, \mathcal{P}(S))$ and let $\Omega(s)$  denote the index of the pool round that share $s$ belongs to i.e., $\Omega(s)=\{x\leq |\mathcal{P}(S)| \mid s \in P_x\}$. Then the PPLNS scheme is defined as
$$\alpha(s_i, H)= \frac{\Omega(s_{N+i})-\Omega(s_i)}{N}R.$$

It is trivial that PPLNS satisfies ordinality. However, Example \ref{exmp4} shows that PPLNS fails to satisfy budget limit, fixed total rewards, round based rewards, absolute redistribution and relative redistribution. 

\begin{example}\label{exmp4}
	Consider the PPLNS reward sharing scheme with $N=3$. Let $H=(S, \mathcal{P}(S))$ be a history as follows,
	\begin{align*}
		S&=\{s_1,s_2,\dots, s_{1000}\}\\
		\mathcal{P}(S)&=\{ \{\underbrace{s_1,s_2,s_3,s_4,\mathbf{s_5}}_{P_1}\}, \{\underbrace{\mathbf{s_6}}_{P_2}\},
		\{\underbrace{\mathbf{s_7}}_{P_3}\},
		\{ \underbrace{\mathbf{s_8}}_{P_4} \},
		\{ \underbrace{s_9,\dots,\mathbf{s_{20}}}_{P_5}\},\dots, \{\underbrace{\dots, \mathbf{s_{1000}}}_{P_{132}}\} 
		\}
	\end{align*}
	
	To show that PPLNS fails to satisfy the budget limit and  fixed total rewards, note that the rewards of the shares in the first and second rounds of the aforementioned history are 
	\begin{align*}
		\alpha(s_1, H)&=0 &\alpha(s_2, H)&=0  & \alpha(s_3, H)&=\frac{1}{3}R   \\
		\alpha(s_4, H)&=\frac{2}{3}R  &  \alpha(s_5, H)&=R    & \alpha(s_6, H)&=\frac{3}{3}R
	\end{align*}
	Therefore, as   $\sum\limits_{s\in P_1}\alpha(s, H)=\frac{6}{3}R$, the PPLNS  fails to satisfy the budget limit. Also, as $\sum\limits_{s\in P_1}\alpha(s, H) \neq \sum\limits_{s\in P_2}\alpha(s, H)$  the PPLNS  fails to satisfy the fixed total rewards.
	
	To show that PPLNS fails to satisfy the round based rewards, consider the restriction of $H$ to the first round, i.e., $H|_1 = \Big( \{s_{1},\dots,\mathbf{s_{5}}\}, \, \Big\{\{\underbrace{s_{1},\dots,\mathbf{s_{5}}}_{P_1}\}\Big\} \Big)$. The rewards of each share would be $ \alpha(s_1, H|_1)= \alpha(s_2, H|_1)=0$, and $\alpha(s_3, H|_1)= \alpha(s_4, H|_1)=\alpha(s_5, H|_1)=\frac{R}{3}$. Comparing these with the reward of each share at the history $H$ shows that the PPLNS  fails to satisfy the round based rewards.
	
	To show that PPLNS fails to satisfy absolute redistribution and relative redistribution, consider $H'= (S',\mathcal{P}'(S'))$ as an extension of $H$ at the first round as follows:
	\begin{align*}
		S'&=\{s_1,s_2,\dots, s^*, \dots, s_{1000}\}\\
		\mathcal{P}'(S')&=\{ \{\underbrace{s_1,s_2,s_3,s_4,s_5,\mathbf{s^*}}_{P_1}\}, \{\underbrace{\mathbf{s_6}}_{P_2}\},
		\{\underbrace{\mathbf{s_7}}_{P_3}\},
		\{ \underbrace{\mathbf{s_8}}_{P_4} \},
		\{ \underbrace{s_9,\dots,\mathbf{s_{20}}}_{P_5}\},\dots, 	 \{\underbrace{\dots, \mathbf{s_{1000}}}_{P_{132}}\} 
		\}
	\end{align*}
	It is easy to verify that $\alpha(s_1, H')=\alpha(s_2, H')=\alpha(s_3, H')=0$, $\alpha(s_4, H')=\frac{1}{3}R$, $\alpha(s_5, H')=\frac{2}{3}R$, and $\alpha(s^*, H')=R$.
	
	Since, $\alpha(s_1, H)-\alpha(s_1, H')\neq\alpha(s_3, H)-\alpha(s_3, H')$ then the PPLNS fails to satisfy the absolute redistribution. Also as $\frac{\alpha(s_4, H')}{\alpha(s_4, H)}\neq\frac{\alpha(s_5, H')}{\alpha(s_5, H)}$ the PPLNS fails to satisfy the relative redistribution.
\end{example}
Besides the two popular schemes analyzed above, the academic literature has suggested additional and more sophisticated schemes designed to provide miners with improved incentives.  For instance, \cite{rosenfeld2011analysis} suggests the to use a so-called geometric scheme.

$\bullet$ \textbf{Geometric}: In this scheme, the rewards are distributed among the shares in a round using a geometric series based on the order of their submission. Unlike other schemes the fees are variable in this model and they depend on the size of the round. Formally, let $r>1$. The fee for a round  is defined as $f(P(s))=\frac{1}{r^{|P(s)|}}$ and the reward to each share is 	
$$\alpha(s, H)=\frac{(r-1)}{r^{|P(s)|-\rho(s)+1}}B.$$

It is straight forward to see that the geometric scheme satisfies relative redistribution, round based rewards and ordinality. The following proposition shows it also satisfies  budget limit.
\begin{prop}
	The geometric scheme satisfies budget limit.
\end{prop}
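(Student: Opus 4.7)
The plan is to prove the inequality $\sum_{s \in P} \alpha(s, H) \leq B$ for any history $H$ and any round $P \in \mathcal{P}(S)$ by a direct computation that reduces to a finite geometric series.

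First I would fix an arbitrary round $P$ with length $n = |P|$ and recall that the relative ranks of shares in $P$ take the values $1, 2, \ldots, n$ bijectively. Then the sum of awards becomes
\[
\sum_{s \in P} \alpha(s, H) = \sum_{k=1}^{n} \frac{r-1}{r^{\,n-k+1}} B = (r-1) B \sum_{k=1}^{n} \frac{1}{r^{\,n-k+1}}.
\]
Reindexing via $j = n - k + 1$ (so $j$ runs from $1$ to $n$ as $k$ runs from $n$ down to $1$) turns this into $(r-1) B \sum_{j=1}^{n} r^{-j}$, which is a plain geometric sum.

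Next I would apply the standard geometric series identity $\sum_{j=1}^{n} r^{-j} = \frac{1 - r^{-n}}{r - 1}$, valid since $r > 1$. Substituting back gives
\[
\sum_{s \in P} \alpha(s, H) = (r-1) B \cdot \frac{1 - r^{-n}}{r - 1} = B\bigl(1 - r^{-n}\bigr),
\]
and since $r > 1$ yields $0 < r^{-n} < 1$, we conclude $\sum_{s \in P} \alpha(s, H) < B$, which is strictly stronger than the required inequality.

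There is no real obstacle here; the argument is essentially a one-line geometric series summation. The only mild subtlety is bookkeeping the indices so that $\rho(s) = 1$ corresponds to $r^{-n}$ and $\rho(s) = n$ corresponds to $r^{-1}$, but this is handled cleanly by the substitution $j = n - k + 1$. As a side remark, the computation also exposes that the pool manager's fee $f(P) = B \cdot r^{-n}$ is automatically nonnegative and vanishes as $n \to \infty$, consistent with the geometric scheme's stated fee structure $f(P(s)) = r^{-|P(s)|}$ (up to the reward normalization $B$).
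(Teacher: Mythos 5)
Your proof is correct and follows essentially the same route as the paper's: reindex the awards over a round of length $n$, sum the resulting geometric series to get $B(1-r^{-n})$, and conclude from $r>1$ that this is strictly below $B$. No gaps.
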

\begin{proof} Let $H=(S,\mathcal{P}(S))$ be any history. Take any round $r$. Let  $|P_r|=k$. Without loss of generality, assume $P_r=\{1,2,\dots,k\}$ so that $\rho(s)=s$. Then
	\begin{align}\label{eq1}
		\sum_{s=1}^{k}\alpha(s, H)
		&=\sum_{s=1}^{k}\frac{(r-1)}{r^{k-s+1}}B =B(r-1)\sum\limits_{s=1}^{k}\frac{1}{r^s}\nonumber\\
		& =B(r-1)\frac{1}{r}\left(\frac{1-\frac{1}{r^n}}{1-\frac{1}{r}}\right)  =B(r-1)\left(\frac{1-\frac{1}{r^n}}{r-1}\right) =B(1- \frac{1}{r^n}) 
	\end{align}
	As $r>1$ then Equation  \ref{eq1}  is always less than $B$.
\end{proof}

The following example shows the geometric scheme fails to satisfy the  fixed total reward and absolute redistribution.

\begin{example}\label{exmp5}
	Consider a  history $H=(S, \mathcal{P}(S))$ as follows,
	\begin{align*}
		S&=\{s_1,s_2,\dots, s_{1000}\}\\
		\mathcal{P}(S)&=\{ \{\underbrace{s_1,\mathbf{s_2}}_{P_1}\}, \{\underbrace{\mathbf{s_3}}_{P_2}\},
		\{ \underbrace{s_4,\dots,\mathbf{s_{20}}}_{P_3}\},\dots, \{\underbrace{\dots, \mathbf{s_{1000}}}_{P_{132}}\} 
		\}
	\end{align*}
	It is easy to verify that $
	\alpha(s_1, H)=\frac{r-1}{r^2}B$, $\alpha(s_2, H)=\frac{r-1}{r}B$, and $\alpha(s_3, H)=\frac{r-1}{r}B$.
	
	As   $\sum\limits_{s\in P_1}\alpha(s, H) \neq \sum\limits_{s\in P_2}\alpha(s, H)$ then geometric scheme  fails to satisfy the fixed total rewards.
	
	To show that geometric fails to satisfy absolute redistribution consider $H'= (S',\mathcal{P}'(S'))$ as an extension of $H$ at the first round as follows:
	\begin{align*}
		S'&=\{s_1,s_2,\dots, s^*, \dots, s_{1000}\}\\
		\mathcal{P}'(S')&=\{ \{\underbrace{s_1,s_2,\mathbf{s^*}}_{P_1}\}, \{\underbrace{\mathbf{s_3}}_{P_2}\},
		\{ \underbrace{s_4,\dots,\mathbf{s_{20}}}_{P_3}\},\dots, 	 \{\underbrace{\dots, \mathbf{s_{1000}}}_{P_{132}}\} 
		\}
	\end{align*}
	It is easy to verify that $\alpha(s_1, H')=\frac{r-1}{r^3}B$, $\alpha(s_2, H')=\frac{r-1}{r^2}B$, $\alpha(s^*, H')=\frac{r-1}{r}B$.	Since $r>1$ and $\alpha(s_1, H)-\alpha(s_1, H')\neq\alpha(s_2, H)-\alpha(s_2, H')$ then the geometric fails to satisfy the absolute redistribution.
\end{example}

Note that the Geometric scheme fails to satisfy the fixed total reward axiom.  We propose the  following modification of this scheme to fix this failure.  

$\bullet$ \textbf{Constrained Geometric:} This scheme is defined as $\alpha(s, H)=\dfrac{(r-1)}{r^{|P(s)|-\rho(s)+1}}\times \dfrac{r^{|P(s)|}}{r^{|P(s)|}-1}$. The constrained geometric scheme is included in the class of relative fairness reward sharing schemes (with $\epsilon_1=1$ and $\epsilon_j=\frac{r^{j-1}-1}{r^{j}-1}$). Therefore, it satisfies budget limit, total fix rewards, ordinality and relative redistribution. To show that it fails absolute re-distribution consider the history $H$ and its extension $H'$ as presented in Example \ref{exmp5}. It can be verified that $
\alpha(s_1, H)=\frac{r-1}{r^2}\times \frac{r^2}{r^2-1}B$, $\alpha(s_2, H)=\frac{r-1}{r}\times\frac{r^2}{r^2-1}B$, and $\alpha(s_1, H')=\frac{r-1}{r^3}\times\frac{r^3}{r^3-1}B$, $\alpha(s_2, H')=\frac{r-1}{r^2}\times\frac{r^3}{r^3-1}B$, $\alpha(s^*, H')=\frac{r-1}{r}\times\frac{r^3}{r^3-1}B$. Since $r>1$ and $\alpha(s_1, H)-\alpha(s_1, H')\neq\alpha(s_2, H)-\alpha(s_2, H')$ then the  constrained geometric fails to satisfy the absolute redistribution.

Next, we formulate another scheme which approaches the reward sharing problem from  the  ``incentive compatibility'' perspective, i.e., it provides miners with the incentive to report shares immediately.

$\bullet$ {\bf IC scheme:} The IC scheme is proposed by \cite{schrijvers2016incentive} as an \textit{incentive compatible} reward sharing scheme. In words, let $1/D$ denote the probability of a share to be a full solution. Then, if the round length is at least of the same size as $D$, the scheme distributes the reward proportionally according to the length of the round; if the round length is shorter, every share receives $1/D$ and the residual budget is given to the last (full) share of the round. Formally,
\begin{equation*}
	\alpha(s,H) = \left\{\begin{array}{ll}
		\frac{R}{D} & \textit{~~ if ~~}  |P(s)|\leq D \text{ and } \rho(s)<|P(s)|\\
		\frac{R}{D}+ (1-\frac{|P(s)|}{D})R, & \textit{~~ if ~~}  |P(s)|\leq D \text{ and } \rho(s)=|P(s)| \\
		\frac{R}{|P(s)|}, & \textit{~~ if ~~}  |P(s)|\geq D 
	\end{array}\right.
\end{equation*}
This rule obviously satisfies fixed total rewards, ordinality, and round based rewards, while it fails both absolute and relative redistribution.

We conclude this section by formulating with one more reward sharing scheme, which is interesting. The Slush scheme, named after the Slush mining pool\footnote{\url{https://slushpool.com/}}, is the only reward sharing scheme that uses time signatures as a parameter for distributing the rewards. Our framework is rich enough to capture this feature. Formally: 

$\bullet$  \textbf{Slush:}  Consider a history $H=(S, \mathcal{P}(S))$ and let $\Omega(s)$  denote the index of the pool round that share $s$ belongs to i.e., $\Omega(s)=\{x\leq |\mathcal{P}(S)| \mid s \in P_x\}$. Let $\bar{s}_j$ denote the last share in the $j^{th}$ round, i.e., $\bar{s}_j=\{s\in P_j\mid \tau(s) \geq \tau(s'),~\forall~s'\in P_j \}$. Then

Let $score(s, j)= \dfrac{e^\frac{{\tau(s)-\tau(\bar{s}_{j} )}}{\lambda}
}{\sum\limits_{\tau(s')\leq \tau(\bar{s}_{j})} e^\frac{{\tau(s')-\tau(\bar{s}_{j})}}{\lambda}}$ for any $\Omega(s) \leq j\leq l$. Then
$$\alpha(s, H)=R\sum\limits_{i=\Omega(s)}^{l}score(s,i).$$ 

The parameter $\lambda$  is set to $1200$ in the Slush pool. In what follows, we therefore assume $\lambda=1200$.

In the following example we show that the slush scheme does not satisfy fixed total reward, ordinality, budget limit, round based reward, absolute redistribution and relative redistribution.

\begin{example}
	Consider a  history $H=(S, \mathcal{P}(S))$ as follows,
	\begin{align*}
		S&=\{s_1,s_2, s_3\}\\
		\mathcal{P}(S)&=\{ \{\underbrace{s_1,\mathbf{s_2}}_{P_1}\},
		\{\underbrace{\mathbf{s_3}}_{P_2}\}
		\}
	\end{align*}
	Let $\tau(s_1)=1, \tau(s_2)=2$ and $\tau(s_1)=3$. Therefore, we have:
	\begin{align*}
		& score(s_1,1)=\frac{e^{\frac{1-2}{1200}}}{e^{\frac{1-2}{1200}}+e^{\frac{2-2}{1200}}}\approx 0.49\\
		& score(s_1,2)=\frac{e^{\frac{1-3}{1200}}}{e^{\frac{1-3}{1200}}+e^{\frac{2-3}{1200}}+e^{\frac{3-3}{1200}}}\approx 0.33\\
		&	  score(s_2,1)=\frac{e^{\frac{2-2}{1200}}}{e^{\frac{1-2}{1200}}+e^{\frac{2-2}{1200}}}\approx 0.5\\
			&   score(s_2,2)= \frac{e^{\frac{2-3}{1200}}}{e^{\frac{1-3}{1200}}+e^{\frac{2-3}{1200}}+e^{\frac{3-3}{1200}}}\approx 0.33\\
	\end{align*}
		\begin{align*}			
		&   score(s_3,2)=\frac{e^{\frac{3-3}{1200}}}{e^{\frac{1-3}{1200}}+e^{\frac{2-3}{1200}}+e^{\frac{3-3}{1200}}}\approx 0.33\\
	\end{align*}
	Therefore,
	$\alpha(s_1, H)=0.82R$, $\alpha(s_2, H)=.83R$ and $\alpha(s_3, H)=.33R$. 
	
	As   $\sum\limits_{s\in P_1}\alpha(s, H)=1.65R$, the Slush scheme violates the budget limit. As $\sum\limits_{s\in P_1}\alpha(s, H) \neq \sum\limits_{s\in P_2}\alpha(s, H)$, it fails to satisfy the fixed total rewards. It is easy to see that the Slush scheme also fails the round based reward axiom, e.g., for the second round in this example. In addition, one can verify that changing the time signature of any of the shares will have an effect on the award of shares, therefore the Slush scheme also violates ordinality.
	
	Now consider $H'= (S', \mathcal{P}'(S'))$ as an extension of $H$ at the first round as follows:
	\begin{align*}
		S'&=\{s_1,s_2, s^*, s_{3}\}\\
		\mathcal{P}'(S')&=\{ \{\underbrace{s_1,s_2,\mathbf{s^*}}_{P_1}\}, \{\underbrace{\mathbf{s_3}}_{P_2}\}
		\}
	\end{align*}
	such that $\tau(s^*)=2.5$	Then the award of each share would be:
	\begin{align*}
		& score(s_1,1)=\frac{e^{\frac{1-2.5}{1200}}}{e^{\frac{1-2.5}{1200}}+e^{\frac{2-2.5}{1200}}+e^{\frac{2.5-2.5}{1200}}}\approx 0.33\\
		& score(s_1,2)=\frac{e^{\frac{1-3}{1200}}}{e^{\frac{1-3}{1200}}+e^{\frac{2-3}{1200}}+e^{\frac{2.5-3}{1200}}+e^{\frac{3-3}{1200}}}\approx 0.25\\
		&	  score(s_2,1)=\frac{e^{\frac{2-2.5}{1200}}}{e^{\frac{1-2.5}{1200}}+e^{\frac{2-2.5}{1200}}+e^{\frac{2.5-2.5}{1200}}}\approx 0.33 \\
		& 
		score(s_2,2)= \frac{e^{\frac{2-3}{1200}}}{e^{\frac{1-3}{1200}}+e^{\frac{2-3}{1200}}+e^{\frac{2.5-3}{1200}}+e^{\frac{3-3}{1200}}}\approx 0.25\\
			&   score(s^*,1)= \frac{e^{\frac{2.5-2.5}{1200}}}{e^{\frac{1-2.5}{1200}}+e^{\frac{2-2.5}{1200}}+e^{\frac{2.5-2.5}{1200}}}\approx 0.33\\
		&   score(s^*,2)= \frac{e^{\frac{2.5-3}{1200}}}{e^{\frac{1-3}{1200}}+e^{\frac{2-3}{1200}}+e^{\frac{2.5-3}{1200}}+e^{\frac{3-3}{1200}}}\approx 0.25 \\
		&   score(s_3,2)=\frac{e^{\frac{3-3}{1200}}}{e^{\frac{1-3}{1200}}+e^{\frac{2-3}{1200}}+e^{\frac{2.5-3}{1200}}+e^{\frac{3-3}{1200}}}\approx 0.25 \\
	\end{align*}
	Therefore,
	$\alpha(s_1, H')=0.58R$ and $\alpha(s_2, H')=.58R$. Comparing these to $\alpha(s_1, H)=0.82R$ and $\alpha(s_2, H)=.83R$, shows that the Slush scheme fails both absolute and fair redistribution axioms.
\end{example}

The results for the well-known schemes are summarized in Table~\ref{tab2} below.

\begin{table}[h]
	\centering
	\caption{Summary of the well-known schemes.}\label{tab2}
	\small
	\begin{tabular}{ >{\raggedright}p{3.1cm} 
			>{\centering}p{1.7cm}
			>{\centering}p{1.8cm} >{\centering}p{1.9cm} >{\centering}p{2cm} >{\centering}p{1.1cm}c}
		\hline
		Scheme & Fixed total reward & Relative redistribution & Absolute redistribution& Round based rewards& Budget limit & Ordinality \\ \hline 
		PPS & - & +& +  & + & - & +\\
		PPLNS & - & - & -  & - & - & + \\
		Geometric & - & +& -  & +& + & + \\
		Modified Geometric & + & +& -  & + & +& + \\
		IC & + & -& -  & + & + & +\\
		Slush & - & -& -  & - & - & -\\
		\hline
	\end{tabular}
\end{table}

\section{Logical Independence}\label{independence_section}
We define six reward sharing schemes in order to demonstrate logical independence of the axioms in Section \ref{axioms_section}. The results are summarized in Table~\ref{tab1} below. Defining schemes 1-6, let $R=B-f$ for some fixed $f\in[0,B]$.
\begin{itemize}
	\item \textbf{Scheme 1:} \[
	\alpha(s,H) = \left\{\begin{array}{ll}
		\frac{R}{|P(s)|}, & \text{ if } |P(s)| \text{ is odd} \\
		\frac{R}{2 |P(s)|}, & \text{ if } |P(s)| \text{ is even}\\
	\end{array}\right.
	\]
	This scheme fails {\it fixed total rewards}, but meets all the other axioms. 
	\item \textbf{Scheme 2:}
	\[
	\alpha(s, H) = \left\{\begin{array}{ll}
		R, & \text{ if } |P(s)|=1\\
		(R-\lambda)+ \frac{\lambda}{|P(s)|}, & \text{ if } |P(s)|>1 \text{ and } \rho(s)= 1\\
		\frac{\lambda}{|P(s)|}, & \text{ if } |P(s)|>1 \text{ and } \rho(s)\neq 1\\
	\end{array}\right.
	\]
	where $0<\lambda< R$ is a constant number. 
	
	This scheme fails {\it relative redistribution}, but meets all the other axioms.
	\item \textbf{Scheme 3:}
	\[ \alpha(s,H) = \left\{\begin{array}{ll}
		\frac{R}{2^{|P(s)|-1}}, & \text{ if } \rho(s)= 1 \\
		\frac{2^{\rho(s)-2}R}{2^{|P(s)|-1}}, & \text{ if }\rho(s)\neq 1 
	\end{array}\right.
	\]
	This scheme fails {\it absolute redistribution}, but meets all the other axioms. 
	\item \textbf{Scheme 4:} \[
	\alpha(s,H) = \left\{\begin{array}{ll}
		\frac{R}{|P(s)|}, & \text{ if the number of shares in the first round of the history is odd} \\
		\frac{R}{2 |P(s)|}, & \text{ if the number of shares in the first round of the history is even} 
	\end{array}\right.
	\]
	This scheme fails {\it round based rewards}, but meets all the other axioms.
	\item	\textbf{Scheme 5:} \[
	\alpha(s,H) = \frac{2R}{|P(s)|}
	\] 
	
	This scheme fails {\it budget limit}, but meets all the other axioms. 
	\item \textbf{Scheme 6:}
	\[
	\alpha(s,H) = \left\{\begin{array}{ll}
		R, & \textit{~~ if ~~}  |P(s)|=1 \\
		\alpha^{2,\frac{R}{2}}(s,H), & \textit{~~ if ~~}  \tau(\rho(s)=1) - \tau(\rho(s)=2) < T \\
		\alpha^{2,\frac{R}{3}}(s,H), & \textit{~~ if ~~}  \tau(\rho(s)=1) - \tau(\rho(s)=2) \geq T 
	\end{array}\right.
	\]
	where $T$ is a time threshold.
	
	This scheme fails {\it ordinality}, but meets all the other axioms. 
	
\end{itemize}

\begin{table}[ht]
	\centering
	\caption{Logical independence of the axioms.}\label{tab1}
	\small
	\begin{tabular}{ >{\centering}p{1.5cm} >{\centering}p{2cm}
			>{\centering}p{2cm} >{\centering}p{2cm} >{\centering}p{2cm} >{\centering}p{2cm}ccccc}
		\hline
		Scheme & Fixed total reward & Relative redistribution & Absolute redistribution& Round based rewards& Budget limit & Ordinality \\ \hline 
		Scheme 1 & - & +& +  & + & + & +\\
		Scheme 2 & + & - & +  & + & + & + \\
		Scheme 3& + & +& -  & +& + & + \\
		Scheme 4 & + & +& +  & - & +& + \\
		Scheme 5 & + & +& +  & + & - & +\\
		Scheme 6 & + & +& +  & + & + & -\\
		\hline
	\end{tabular}
\end{table}

\hspace{15cm}~

\end{document}